\documentclass[runningheads,a4paper]{llncs}

\usepackage[english]{babel}
\usepackage[ansinew]{inputenc}
\usepackage{amsmath}
\usepackage{amsfonts}
\usepackage{amssymb}
\usepackage{graphicx}
\usepackage{fancybox}
\usepackage{calc}
\usepackage{tabularx}
\usepackage{tikz}
\usepackage{array}
\usepackage{lmodern}
\usepackage{mathabx}
\usepackage{fancybox}
\usepackage{fullpage}

\usepackage[margin=1.6in]{geometry}
\usepackage{tkz-berge}
\usetikzlibrary{fit,shapes}

\newcommand{\Mod}{\ \mathrm{mod}\ }
\newcommand{\xdownarrow}[1]{%
  {\left\downarrow\vbox to #1{}\right.\kern-\nulldelimiterspace}}

\newtheorem{Thm}{Theorem}
\newtheorem{Lem}[Thm]{Lemma}

\newtheorem{Def}{Definition}

\title{Investcoin:\\ A System for Privacy-Preserving Investments
\thanks{The research was supported by the DFG Research Training Group GRK $1817/1$}
}
\author{
Filipp Valovich\\ Email: filipp.valovich@rub.de
}
\institute{
Horst G\"{o}rtz Institute for IT Security\\ Faculty of Mathematics\\ Ruhr-Universit\"{a}t Bochum, Universit\"{a}tsstra{\ss}e 150, 44801 Bochum, Germany
}
\date{}

\bibliographystyle{plain}
\allowdisplaybreaks

\begin{document}

\maketitle

\pagenumbering{gobble}
\pagestyle{empty}

\vspace{-1cm}
\begin{abstract}
This work presents a new framework for Privacy-Preserving Investment systems in a distributed model. In this model, independent investors can transfer funds to independent projects, in the same way as it works on crowdfunding platforms. The framework protects the investors' single payments from being detected (by any other party), only the sums of each investor's payments are revealed (e.g to the system). Likewise, the projects' single incoming payments are concealed and only the final sums of the incoming payments for every project are revealed. In this way, no other party than the investor (not even the system administration) can detect how much she paid to any single project. Though it is still possible to confidentially exchange any part of an investment between any pair of investors, such that market liquidity is unaffected by the system. On top, our framework allows a privacy-preserving return of a multiple of all the held investments (e.g. interest payments or dividends) to the indivdual investors while still revealing nothing else than the sum of all returns for every investor. We provide reasonable security guarantees for this framework that are based on common notions from the Secure Multi-Party Computation (SMPC) literature. As an instantiation for this framework we present Investcoin. This is a proper combination of three cryptographic protocols, namely a Private Stream Aggregation scheme, a Commitment scheme and a Range test and it is usable in connection with any existing currency. The security of the three protocols is based on the DDH assumption. Thus, by a composition theorem from the SMPC literature, the security of the resulting Investcoin protocol is also based on the DDH assumption. Furthermore, we provide a simple decentralised key generation protocol for Investcoin that supports dynamic join, dynamic leave and fault-tolarance of investors and moreover achieves some security guarantees against malicious investors.
\end{abstract}

\section{Introduction}


The promise of performance benefit by using technologies like online-outsourcing and cloud-computing goes along with the loss of control over individual data. Therefore the public awareness of data protection increases. We use encryption and privacy technologies to protect our electronic messages, our consumer behaviour or patient records. In this work, we put the following question up for discussion: why is there only minor attention paid to the protection of sensitive \textit{financial data} in the public? Indeed the requirement to trust in financial institutes may be an obstacle for the trade secrecy of companies. On the one hand, transactions on organised markets are registered by electronic systems, audited and eventually get under the control of the system administration (e.g. it can refuse a transaction). In some cases this is desired: e.g. it should be possible to detect a company financing criminal activities. On the other hand, we would like to protect the trade secrecy of the companies. In this sense, there is a transparency/confidentiality trade-off in organised markets, such as exchange or to some extent also crowdfunding platforms.\\
In this work we address the problem of providing adequate privacy guarantees to investors. As observed by Nofer \cite{62}, although there is no observable significant effect concerning "the impact of privacy violations on the investment amount, (...) one has to remember that trust influences behavior (...) and privacy issues influences trust (...) and therefore an indirect influence still exists". Conversely, this means that individuals would participate more in investments if their privacy is protcted. As an effect, the reporting of investors concerning wins and losses (and therefore risks) becomes more reliable \cite{63,64,65}. As further motivation of our work, the possibility to circumvent certain regulatories, such as financial sanctions by order of "repressive" countries, may be desired. Investors may look for a way to invest in sanctioned companies without being traced by their home country.\\
Consequently, the objective of this work is to solve privacy issues by concealing particular investment decisions but offering transparency of "aggregated" investment decisions. In this regard we introduce a Distributed Investment Encryption (DIE) scheme for the aggregation of the investments of a number of different investors funding different projects on an electronic platform. A DIE scheme maintains market liquidity, i.e. the scheme does not affect the possibility to trade assets among investors. Its cryptographic security is set through the definition of the Privacy-Preserving Investment ($\text{\upshape\sffamily PPI}$) system. Informally, a $\text{\upshape\sffamily PPI}$ system conceals the single payments of investors from being traced but reveals to the system only the aggregates of the investors' payments. Similarly, the projects' single incoming payments are concealed and only the final sums of the incoming payments for every project are revealed. Moreover, a $\text{\upshape\sffamily PPI}$ system conceals the single returns (representing interest payments, coupons or dividends) from every single project to every single investor but reveals (to the system administration) the aggregated return of every single investor. Therefore, up to a certain extent, a $\text{\upshape\sffamily PPI}$ system simultaneously maintains transparency (e.g. taxes on the final return of every investor can be raised) and trade secrecy.\\
As a particular $\text{\upshape\sffamily PPI}$ system we present Investcoin, a combination of three cryptographic protocols: a Private Stream Aggregation (PSA) scheme, first introduced by Shi et al. \cite{2}, a (homomorphic) Commitment scheme and a Range test for commited values, all secure under the Decisional Diffie-Hellman assumption. Informally, the PSA scheme is used for the secure aggregation of funds for every particular project and the homomorphic Commitment scheme is used for the secure aggregation of all investments and returns of every particular investor. The Range test ensures that investments are not negative. We provide a simple secret sharing key generation protocol for Investcoin, that allows investors to dynamically join, leave or fail during the protocol execution and prevents investors from some malicious cheating.\\


\noindent\textbf{Related work.} The notion of Commitment schemes (first in \cite{72,73}) is well-established in the literature. The notion of PSA was introduced in \cite{2}. A PSA scheme is a cryptographic protocol which enables a number of users to individually and securely send encrypted time-series data to an untrusted aggregator requiring each user to send exactly one message per time-step. The aggregator is able to decrypt the aggregate of all data per time-step, but cannot retrieve any further information about the individual data. In \cite{2} a security definition for PSA and a secure instantiation were provided. Joye and Libert \cite{38} provided a scheme with a tighter security reduction and Benhamouda et al. \cite{38seq} generalised the scheme in \cite{38}. By lowering the security requirements established in \cite{2}, Valovich and Ald{\`{a}} \cite{40} provided general conditions for the existence of secure PSA schemes, based on key-homomorphic weak PRFs.\\ 
Investcoin is not a classical cryptocurrency. It can be thought of as a cryptographic layer on top of any currency used for investments, similar to what Zerocoin is intended to be for Bitcoin. Bitcoin is the first cryptocurrency, introduced by Satoshi Nakamoto \cite{76}. Currently it is the cryptocurrency with the largest market capitalisation. Bitcoin is as a peer-to-peer payment system where transactions are executed directly between users without interaction of any intermediate party. The transactions are verified by the users of the network and publicly recorded in a blockchain, a distributed database. Zerocoin was proposed by Miers et al. \cite{77} as an extension for Bitcoin (or any other cryptocurrency) providing cryptographic anonymity to recorded transactions in the blockchain (Bitcoin itself provides only pseudonymity). This is achieved by the use of a seperate mixing procedure based on Commitment schemes. Therefore particular transactions cannot be publicly traced back to particular Bitcoin adresses anymore. This is also the main principle of a $\text{\upshape\sffamily PPI}$ system: no investment in a particular project can be traced back to a particular investor. In this regard Investcoin has similarities with Zerocoin.\\
Methods for market regulation through aggregated privacy-preserving risk reporting were studied by Abbe et al. \cite{67}. They constructed protocols allowing a number of users to securely compute aggregated risk measures based on summations and inner products. Flood et al. \cite{66} considered balancing transparency and confidentiality for financial regulation by investigating cryptographic tools for statistical data privacy.

\section{Preliminaries}\label{invprelsec}

In this section we provide the description of our investment model, the basic protocols underlying Investcoin and their corresponding security definitions.



\subsection{Model}\label{invmodelsec}

As initial situation we consider a network consisting of $n$ investors and $\lambda$ projects to be funded by the investors. As an analogy from the real world one can think of a crowdfunding platform or an exchange system where projects or companies try to collect funds from various individual investors. Each investor $N_i$, $i=1,\ldots,n$, is willing to invest the amount $x_{i,j}\geq 0$ to the project $P_j$, $j=1,\ldots,\lambda$, thus the total amount invested by $N_i$ is $\sum_{j=1}^\lambda x_{i,j}$ and the total amount received by project $P_j$ is $\sum_{i=1}^n x_{i,j}$. Moreover, there exists an administration (which may be the administration of the crowdfunding platform). The investors and the project managements \textit{are not required to trust the administration}.\\
We consider a series of investment rounds. An investment round denotes the moment when the payments of all participating investors are registered by the administration of the system. From round to round the values $n$ and $\lambda$ may change, i.e. investors and projects may join or leave the network before any round.\\
After an investment round is over and the time comes to give a return to the investors (i.e. at maturity), the management of each project $P_j$ publishes some value $\alpha_j$ defining the return for each investor (i.e. an indicator of economic growth, interest yield, dividend yield or similar). The untrusted system administration (or simply system) serves as a pool for the distribution of investments to the projects and of returns to the investors: first, for all $i=1,\ldots,n$ it collects the total amount $\sum_{j=1}^\lambda x_{i,j}$ invested by investor $N_i$ and rearranges the union of the total amounts into the aggregated investment $\sum_{i=1}^n x_{i,j}$ for project $P_j$ for all $j=1,\ldots,\lambda$; at maturity date, for all $j=1,\ldots,\lambda$ it collects the total returns $\alpha_j\sum_{i=1}^n x_{i,j}$ of the projects and rearranges the union of the total returns into the returns $\sum_{j=1}^\lambda \alpha_j x_{i,j}$ of the investors.\\ 
While the investors do not have to trust each other nor the system (i.e. an investor doesn't want the others to know her financial data), we consider a honest-but-curious model where the untrusted system administration tries to compromise investors to build a coalition. This coalition tries to infer additional information about uncompromised investors, but under the constraint of honestly following the investment protocol. On the other hand, we allow investors that are not part of the coalition, to execute some malicious behaviour, that we will concretise later. Thereby we have the following objectives in each investment round:\\

\textit{Security}
\vspace{-0.2cm}
\begin{itemize}
\item Hiding: For all $i=1,\ldots,n$ the only financial data of investor $N_i$ (if uncompromised) known to the system is $C_i=\sum_{j=1}^\lambda x_{i,j}$ and $E_i=\sum_{j=1}^\lambda \alpha_j x_{i,j}$. For all $j=1,\ldots,\lambda$ the only financial data of project $P_j$ known to the system is $X_j=\sum_{i=1}^n x_{i,j}$ and $\alpha_j$. Particularly, the single investments of $N_i$ to $P_1,\ldots,P_\lambda$ should remain concealed.
\item Binding: Investors may not announce an incorrect investment, i.e. if $N_i$ has send $x_{i,j}$ to $P_j$, then $P_j$ should also receive $x_{i,j}$ from $N_i$.
\item For all $i,j$ it holds that $x_{i,j}\geq 0$, i.e. no investor can 'steal' money from a project.
\end{itemize}\par\medskip

\textit{Correctness}
\vspace{-0.2cm}
\begin{itemize}
\item For all $i$, if $C_i$ is the real aggregate of $N_i$'s investments, then $C_i=\sum_{j=1}^\lambda x_{i,j}$, the system knows $C_i$ and can charge the bank account of $N_i$ with amount $C_i$.
\item For all $j$, if $X_j$ is the real aggregate of $P_j$'s funds, then $X_j=\sum_{i=1}^n x_{i,j}$, the system knows $X_j$ and transfers the amount $X_j$ to the bank account of $P_j$.
\item For all $i$, if $E_i$ is the real aggregate of $N_i$'s returns, then $E_i=\sum_{j=1}^\lambda \alpha_j x_{i,j}$, the system knows $E_i$ and transfers the amount $E_i$ to the bank account of $N_i$.
\item If one of these conditions is violated (e.g. on the purpose of stealing money), then the injured party should be able to detect this fact and to prove it to the network latest after the end of the corresponding investment round.
\end{itemize}

Now we provide the building blocks for a scheme satisfying these objectives. 

\subsection{Private Stream Aggregation}\label{invpsasec}

In this section, we define Private Stream Aggregation (PSA) and provide a security definition. The notion of PSA was introduced by Shi et al. \cite{2}.

\subsubsection{The definition of Private Stream Aggregation.}

A PSA scheme is a protocol for safe distributed time-series data transfer which enables the receiver (here: the system administrator) to learn nothing else than the sums $\sum_{i=1}^n x_{i,j}$ for $j=1,2,\ldots$, where $x_{i,j}$ is the value of the $i$th participant in (time-)step $j$ and $n$ is the number of participants (here: investors). Such a scheme needs a key exchange protocol for all $n$ investors together with the administrator as a precomputation, and requires each investor to send exactly one message (namely the amount to spend for a particular project) in each step $j=1,2,\ldots$.

\begin{Def}[Private Stream Aggregation \cite{2}]
Let $\kappa$ be a security parameter, $\mathcal{D}$ a set and $n, \lambda\in\mathbb{N}$ with $n=\text{poly}(\kappa)$ and $\lambda=\text{poly}(\kappa)$. A \text{\upshape Private Stream Aggregation} (PSA) scheme $\Sigma=(\mbox{\upshape\sffamily Setup}, \mbox{\upshape\sffamily PSAEnc}, \mbox{\upshape\sffamily PSADec})$ is defined by three ppt algorithms:
\begin{description}
\item \textbf{\mbox{\upshape \sffamily Setup}}: $(\mbox{\upshape\sffamily pp},T,s_0,s_1,\ldots,s_n)\leftarrow \mbox{\upshape\sffamily Setup}(1^\kappa)$ with public parameters $\mbox{\upshape\sffamily pp}$, $T=\{t_1,\ldots,t_\lambda\}$ and secret keys $s_i$ for all $i=1,\ldots,n$.
\item \textbf{\mbox{\upshape \sffamily PSAEnc}}: For $t_j\in T$ and all $i=1,\ldots,n$: $c_{i,j}\leftarrow \mbox{\upshape\sffamily PSAEnc}_{s_i}(t_j,x_{i,j})\mbox{ for } x_{i,j}\in\mathcal{D}$.
\item \textbf{\mbox{\upshape \sffamily PSADec}}: Compute $\sum_{i=1}^n x'_{i,j}=\mbox{\upshape\sffamily PSADec}_{s_0}(t_j,c_{1,j},\ldots,c_{n,j})$ for $t_j\in T$ and ciphers $c_{1,j},\ldots,c_{n,j}$. For all $t_j\in T$ and $x_{1,j},\ldots,x_{n,j}\in\mathcal{D}$ the following holds:
\[\mbox{\upshape\sffamily PSADec}_{s_0}(t_j, \mbox{\upshape\sffamily PSAEnc}_{s_1}(t_j,x_{1,j}),\ldots,\mbox{\upshape\sffamily PSAEnc}_{s_n}(t_j,x_{n,j}))=\sum_{i=1}^n x_{i,j}.\]
\end{description}
\end{Def}

The system parameters $\mbox{\upshape\sffamily pp}$ are public and constant for all $t_j$ with the implicit understanding that they are used in $\Sigma$. Every investor encrypts her amounts $x_{i,j}$ with her own secret key $s_i$ and sends the ciphertext to the administrator. If the administrator receives the ciphertexts of \textit{all} investors for some $t_j$, it can compute the aggregate of the investors' data using the decryption key $s_0$.\\
While in \cite{2}, the $t_j\in T$ were considered to be time-steps within a time-series (e.g. for analysing time-series data of a smart meter), in our work the $t_j\in T$ are associated with projects $P_j$, $j=1,\ldots,\lambda$, to be funded in a particular investment round.

\subsubsection{Security of Private Stream Aggregation.}

Our model allows an attacker to compromise investors. It can obtain auxiliary information about the values of investors or their secret keys. Even then a secure PSA scheme should release no more information than the aggregates of the uncompromised investors' values.

\begin{Def}[Aggregator Obliviousness \cite{2}]\label{securitygame1}
Let $\kappa$ be a security parameter. Let $\mathcal{T}$ be a ppt adversary for a PSA scheme $\Sigma=$ $(\mbox{\upshape\sffamily Setup}, \mbox{\upshape\sffamily PSAEnc}, \mbox{\upshape\sffamily PSADec})$ and let $\mathcal{D}$ be a set. We define a security game between a challenger and the adversary $\mathcal{T}$.
\begin{description}
 \item\textbf{Setup.} The challenger runs the \text{\upshape\sffamily Setup} algorithm on input security parameter $\kappa$ and returns public parameters $\mbox{\upshape\sffamily pp}$, public encryption parameters $T$ with $|T|=\lambda=\text{poly}(\kappa)$ and secret keys $s_0,s_1,\ldots,s_n$. It sends $\kappa,\mbox{\upshape\sffamily pp}, T, s_0$ to $\mathcal{T}$.
\item\textbf{Queries.} $\mathcal{T}$ is allowed to query $(i,t_j,x_{i,j})$ with $i\in\{1,\ldots,n\}, t_j\in T, x_{i,j}\in\mathcal{D}$ and the challenger returns $c_{i,j}\leftarrow\mbox{\upshape\sffamily PSAEnc}_{s_i}(t_j,x_{i,j})$.
Moreover, $\mathcal{T}$ is allowed to make compromise queries $i\in\{1,\ldots,n\}$ and the challenger returns $s_i$.
\item\textbf{Challenge.} $\mathcal{T}$ chooses $U\subseteq\{1,\ldots,n\}$ such that no compromise query for $i\in U$ was made and sends $U$ to the challenger. $\mathcal{T}$ chooses $t_{j^*}\in T$ such that no encryption query with $t_{j^*}$ was made. (If there is no such $t_{j^*}$ then the challenger simply aborts.) $\mathcal{T}$ queries two different tuples $(x_{i,j^*}^{[0]})_{i\in U},(x_{i,j^*}^{[1]})_{i\in U}$ with
\[\sum_{i\in U} x_{i,j^*}^{[0]}=\sum_{i\in U} x_{i,j^*}^{[1]}.\]
The challenger flips a random bit $b\leftarrow_R\{0,1\}$. For all $i\in U$ the challenger returns $c_{i,j^*}\leftarrow\mbox{\upshape\sffamily PSAEnc}_{s_i}(t_{j^*},x_{i,j^*}^{[b]})$.
\item\textbf{Queries.} $\mathcal{T}$ is allowed to make the same type of queries as before restricted to encryption queries with $t_j\neq t_{j^*}$ and compromise queries for $i\notin U$.
\item\textbf{Guess.} $\mathcal{T}$ outputs a guess about $b$.
\end{description}
The adversary wins the game if it correctly guesses $b$. A PSA scheme achieves \text{\upshape Aggregator Obliviousness} ($\mbox{\upshape\sffamily AO}$) or is \text{\upshape secure} if no ppt adversary $\mathcal{T}$ has more than negligible advantage (with respect to the parameter $\kappa$) in winning the above game.
\end{Def}

Encryption queries are made only for $i\in U$, since knowing the secret key for all $i\notin U$ the adversary can encrypt a value autonomously. If encryption queries in time-step $t_{j^*}$ were allowed, then no deterministic scheme would be secure. The adversary $\mathcal{T}$ can determine the original data of all $i\notin U$, since it knows $(s_i)_{i\notin U}$. Then $\mathcal{T}$ can compute the sum $\sum_{i\in U} x_{i,j}=\mbox{\upshape\sffamily PSADec}_{s_0}(t_j,c_{1,j},\ldots,c_{n,j})-\sum_{i\notin U} x_{i,j}$ of the uncompromised investors' values. If there is an investor's cipher which $\mathcal{T}$ does not receive, then it cannot compute the sum for the corresponding $t_j$.\\
It is also possible to define $\mbox{\upshape\sffamily AO}$ in the non-adaptive model as in \cite{40,68}: here an adversary may not compromise investors adaptively, but has to specify the coalition $U$ of compromised investors \textit{before} making any query.

\subsubsection{Feasibility of \mbox{\upshape\sffamily AO}.}\label{rompsasec}

In the random oracle model we can achieve $\mbox{\upshape\sffamily AO}$ for some constructions \cite{2,38seq,40}. Because of its simplicity and efficient decryption, we use the PSA scheme proposed in \cite{40} and present it in Figure \ref{DDHEXM1}. It achieves $\mbox{\upshape\sffamily AO}$ based on the DDH assumption.

\begin{figure}[hpbt]
\fbox{
\begin{minipage}{13.15cm}
\begin{center}\scriptsize
\underline{PSA scheme}
\end{center}
\begin{description}\scriptsize
\item \textbf{\mbox{\upshape \sffamily Setup}}: The public parameters are a primes $q>m\cdot n,p=2q+1$ and a hash function $H: T\to\mathcal{QR}_{p^2}$ modelled as a random oracle. The secret keys are $s_0,\ldots,s_n\leftarrow_R\mathbb{Z}_{pq}$ with $\sum_{i=0}^n s_i=0$ mod $pq$.
\item \textbf{\mbox{\upshape \sffamily PSAEnc}}: For $t_j\in T$ and $i=1,\ldots,n$, encrypt $x_{i,j}\in [-m,m]$ by $c_{i,j}\leftarrow (1+p\cdot x_{i,j})\cdot H(t_j)^{s_i}$ mod $p^2$.
\item \textbf{\mbox{\upshape \sffamily PSADec}}: For $t_j\in T$ and ciphers $c_{1,j},\ldots,c_{n,j}$ compute $V_j\in\{1-p\cdot mn,\ldots,1+p\cdot mn\}$ with \[V_j\equiv H(t_j)^{s_0}\cdot\prod_{i=1}^n c_{i,j}\equiv\prod_{i=1}^n (1+p\cdot x_{i,j})\equiv 1+p\cdot\sum_{i=1}^n x_{i,j} \text{ mod } p^2\] and compute $\sum_{i=1}^n x_{i,j}=(V_j-1)/p$ over the integers (this holds if the $c_{i,j}$ are encryptions of the $x_{i,j}$).
\end{description}
\end{minipage}}
\caption{PSA scheme secure in the random oracle model.}\label{DDHEXM1}
\end{figure}

\normalsize
The scheme proposed in \cite{2} is similar to the one in Figure \ref{DDHEXM1}, but is inefficient, if the range of possible decryption outcomes in is super-polynomially large in the security parameter. The scheme in Figure \ref{DDHEXM1} also achieves the non-adaptive version of $\mbox{\upshape\sffamily AO}$ in the standard model.

\subsection{Commitment schemes}\label{invcomsec}

A Commitment scheme allows a party to publicly commit to a value such that the value cannot be changed after it has been committed to (binding) and the value itself stays hidden to other parties until the owner reveals it (hiding). For the basic definitions we refer to \cite{72} and \cite{73}. Here we just recall the Pedersen Commitment introduced in \cite{71} (Figure \ref{pedersenexm}), which is computationally binding under the dlog assumption and perfectly hiding. In Section \ref{invcnsec} we will combine the Pedersen Commitment with the PSA scheme from Figure \ref{DDHEXM1} for the construction of Investcoin and thereby consider the input data $x=x_{i,j}$ to the Commitment scheme as investment amounts from investor $N_i$ to project $P_j$.
An essential property for the construction of Investcoin is that the Pedersen Commitment contains a homomorphic commitment algorithm, i.e. $\mbox{\upshape\sffamily Com}_{pk}(x, r)*\mbox{\upshape\sffamily Com}_{pk}(x^\prime, r^\prime)=\mbox{\upshape\sffamily Com}_{pk}(x+x^\prime,r+r^\prime)$.

\begin{figure}[hpbt]
\fbox{
\begin{minipage}{13.15cm}
\begin{center}\scriptsize
\underline{Commitment scheme}
\end{center}
\begin{description}\scriptsize
\item \textbf{\mbox{\upshape \sffamily GenCom}}: $(\mbox{\upshape\sffamily pp},pk)\leftarrow \mbox{\upshape\sffamily GenCom}(1^\kappa)$, with public parameters $\mbox{\upshape\sffamily pp}$ describing a cyclic group $G$ of order $q$ and public key $pk=(h_1,h_2)$ for two generators $h_1,h_2$ of $G$. (where $\text{dlog}_{h_1}(h_2)$ is not known to the commiting party).
\item \textbf{\mbox{\upshape \sffamily Com}}: For $x\in\mathbb{Z}_q$ choose $r\leftarrow_R\mathbb{Z}_q$ and compute 
\[com=\mbox{\upshape\sffamily Com}_{pk}(x, r)=h_1^x\cdot h_2^r\in G.\]
\item \textbf{\mbox{\upshape \sffamily Unv}}: For $pk=(h_1,h_2)$, commitment $com$ and opening $(x,r)$ it holds that
\[\mbox{\upshape\sffamily Unv}_{pk}(com,x,r)=1\Leftrightarrow h_1^x\cdot h_2^r=com.\]
\end{description}
\end{minipage}}
\caption{The Pedersen Commitment.}\label{pedersenexm}
\end{figure}

\subsection{Range test}\label{rangesec}

To allow the honest verifier to verify in advance, that the (possibly malicious) prover commits to a an integer $x$ in a certain range, a Range test must be applied. Range tests were studied in \cite{56,57,58,59}. For Investcoin, an interactive proof procedure can be applied. It is a combination of the Pedersen Commitment to the binary representation of $x$ and the extended Schnorr proof of knowledge \cite{79} (Figure \ref{extschnorr}) applied to proving knowledge of one out of two secrets as described by Cramer \cite{80}. Its basic idea was described by Boudot \cite{56}.\\
Let $W(x)$ be a set of witnesses for $x$. We define the following relation.
\[R=\{(x,w)\,|\, x\in L, w\in W(x)\}.\]
An interactive proof procedure has the following properties. First, it is \textit{complete}, meaning that if $(x,w)\in R$ and the (possibly malicious) prover $P$ knows the witness $w$ for $x$, then it is able to convince the honest verifier $V$ (i.e. $V$ honestly follows the procedure). Second, it is \textit{sound}, meaning that if $(x,w)\notin R$, then no ppt algorithm $P$ is able to convince the honest $V$ except with negligible probability. For a proof of knowledge, instead of soundness we need \textit{special soundness}, i.e. there exists a ppt algorithm that extracts the witness $w$ given a pair of different accepting conversations on the same input $x$ such that $(x,w)\in R$. Thus, no prover without knowledge of the witness can convince the verifier. Third, for a proof of knowledge, we need the \textit{special honest-verifier zero-knowledge} property, meaning that there exists a ppt algorithm that on input $x$ outputs an accepting conversation with the same probability distribution as the conversations between $P$ and a honest $V$ on input $x$. Thus, the verifier is not able to distinguish the witness, if it acts honestly. For the extended Schnorr proof in Figure \ref{extschnorr}, completeness, special soundness and special honest-verifier zero-knowledge hold as can be shown by similar arguments as in \cite{80}.\\
By the interactive procedure from Figure \ref{rangetest}, the prover shows to the verifier, that the commited value $x$ lies in the interval $[0,2^l-1]$ without revealing anything else about $x$. For the security of the construction in Figure \ref{extschnorr} we refer to \cite{80}, where a more general protocol was considered (particularly the special honest verifier zero-knowledge property is needed). We use the Fiat-Shamir heuristic \cite{83} to make the Range test non-interactive.

\begin{figure}[hpbt]
\fbox{
\begin{minipage}{13.15cm}
\begin{center}\scriptsize
\underline{Extended Schnorr Proof of Knowledge}
\end{center}\scriptsize
Let $G$ be a cyclic group of order $q$ and $h$ a generator of $G$. The prover wants to show to the verifier that she knows the discrete logarithms of either $R=h^r$ or $S=h^s$ without revealing its value and position. W.l.o.g. the prover knows $r$.
\begin{description}\scriptsize
\item \textbf{\mbox{\upshape \sffamily Com}}: The prover chooses random $v_2,w_2,z_1\leftarrow_R\mathbb{Z}_q^*$ and sends $a_1=h^{z_1}, a_2=h^{w_2}\cdot S^{-v_2}$ to the verifier.
\item \textbf{\mbox{\upshape \sffamily Chg}}: The verifier sends a random $v\leftarrow_R\mathbb{Z}_q^*$ to the prover.
\item \textbf{\mbox{\upshape \sffamily Opn}}: The prover sends $v_1=v-v_2, w_1=z_1+v_1 r, v_2$ and $w_2$ to the verifier.
\item \textbf{\mbox{\upshape \sffamily Chk}}: The verifier verifies that $v=v_1+v_2$ and that $h^{w_1}=a_1 R^{v_1}, h^{w_2}=a_2 S^{v_2}$.
\end{description}
\end{minipage}}
\caption{Schnorr Proof of Knowledge of one out of two secrets.}\label{extschnorr}
\end{figure} 

\begin{figure}[hpbt]
\fbox{
\begin{minipage}{13.15cm}
\begin{center}\scriptsize
\underline{Range test}
\end{center}
\begin{description}\scriptsize
\item \textbf{\mbox{\upshape \sffamily Gen}}: $(\mbox{\upshape\sffamily pp},pk)\leftarrow \mbox{\upshape\sffamily Gen}(1^\kappa)$, with public parameters $\mbox{\upshape\sffamily pp}$ describing the cyclic group $G$ of order $q$, a test range $[0,2^l-1]\subset G$ and public key $pk=(g,h)$ for generators $g,h$ of $G$ (the prover does not know $dlog_g(h)$).
\item \textbf{\mbox{\upshape \sffamily Com}}: For $x=\sum_{k=0}^{l-1} x^{(k)}\cdot 2^k$ with $x^{(k)}\in\{0,1\}$ for all $k=0,\ldots,l-1$, the prover chooses random $r^{(k)},v_2^{(k)},w_2^{(k)},z_1^{(k)}\leftarrow_R\mathbb{Z}_q^*, k=0\ldots,l-1$, computes $r\equiv\sum_{k=0}^{l-1} r^{(k)}\cdot 2^k\text{ mod } q$ and sends
\begin{align*} 
com=g^x\cdot h^r\text{ mod } q,\, com^{(k)}= & g^{x^{(k)}}\cdot h^{r^{(k)}}\text{ mod } q\\
a_1^{(k)}= & h^{z_1^{(k)}}\\
a_2^{(k)}= & h^{w_2^{(k)}}\cdot(h^{r^{(k)}}g^{2x^{(k)}-1})^{-v_2^{(k)}}
\end{align*}
for $k=0,\ldots,l-1$ to the verifier.
\item \textbf{\mbox{\upshape \sffamily Chg}}: The verifier verifies that $com\equiv\prod_{k=0}^{l-1} (com^{(k)})^{2^k}$ mod $q$. The verifier sends random $v^{(k)}\leftarrow_R\mathbb{Z}_q^*$ for all $k=0\ldots,l-1$ to the prover.
\item \textbf{\mbox{\upshape \sffamily Opn}}: The prover sends $v_1^{(k)}=v^{(k)}-v_2^{(k)}, w_1^{(k)}=z_1^{(k)}+v_1^{(k)}r^{(k)}, v_2^{(k)}$ and $w_2^{(k)}$ for all $k=0\ldots,l-1$ to the verifier.
\item \textbf{\mbox{\upshape \sffamily Chk}}: For all $k=0\ldots,l-1$, the verifier verifies that $v^{(k)}=v_1^{(k)}+v_2^{(k)}$ and that either\\ $(h^{w_1^{(k)}},h^{w_2^{(k)}})= (a_1^{(k)}(com^{(k)})^{v_1^{(k)}},a_2^{(k)}(com^{(k)}\cdot g^{-1})^{v_2^{(k)}})$ or\\ $(h^{w_1^{(k)}},h^{w_2^{(k)}})=(a_1^{(k)}(com^{(k)}\cdot g^{-1})^{v_1^{(k)}},a_2^{(k)}(com^{(k)})^{v_2^{(k)}})$.
\end{description}
\end{minipage}}
\caption{Range test for a commited value.}\label{rangetest}
\end{figure}

\subsection{Secure Computation}

Our security definition for Investcoin will be twofold. In the first part, we consider a honest-but-curious coalition consisting of the untrusted system administration together with its compromised investors and the group of honest investors. Here we refer to notions from Secure Multi-Party Computation (SMPC). In the second part, we identify reasonable malicious behaviour that an investor could possibly execute and show, how the system can be secured against such malicious investors.\\
In this Section we focus on defining security against the honest-but-curious coalition.

\begin{Def}\label{hbcsecdef} Let $\kappa$ be a security parameter and $n, \lambda\in\mathbb{N}$ with $n=\text{poly}(\kappa)$. Let $\rho$ be a protocol executed by a group of size $u\leq n$ and a coalition of honest-but-curious adversaries of size $n-u+1$ for computing the deterministic functionality $f_\rho$. The protocol $\rho$ performs a \text{\upshape secure computation} (or \text{\upshape securely computes} $f_\rho$), if there exists a ppt algorithm $\mathcal{S}$, such that $\left\{\mathcal{S}(1^\kappa,y,f_\rho(x,y))\right\}_{x,y,\kappa}\approx_c\left\{\text{\upshape\sffamily view}^\rho(x,y,\kappa)\right\}_{x,y,\kappa}$,
where $\text{\upshape\sffamily view}^\rho(x,y,\kappa)=(y,r,m)$ is the view of the coalition during the execution of the protocol $\rho$ on input $(x,y)$, $x$ is the input of the group, $y$ is the input of the coalition, $r$ is its random tape and $m$ is its vector of received messages.
\end{Def}

This definition follows standard notions from SMPC (as in \cite{81}) and is adapted to our environment: first, we consider two-party protocols where each party consists of multiple individuals (each individual in a party has the same goals) and second, we do not consider security of the coalition against the group, since the system administration has no input and thus its security against honest-but-curious investors is trivial. Rather we will later consider its security against malicious investors.\\
Since Investcoin is the combination of various protocols, we will prove the security of these protocols separately and then use the composition theorem by Canetti \cite{82}.

\begin{Thm}[Composition Theorem in the Honest-but-Curious Model \cite{82}]\label{compthmhbcsec}
Let $\kappa$ be a security parameter and let $m=\text{poly}(\kappa)$. Let $\pi$ be a protocol that computes a functionality $f_\pi$ by making calls to a trusted party computing the functionalities $f_1,\ldots,f_m$. Let $\rho_1,\ldots,\rho_m$ be protocols computing the functionalities $f_1,\ldots,f_m$ respectively. Denote by $\pi^{\rho_1,\ldots,\rho_m}$ the protocol $\pi$, where the calls to a trusted party are replaced by executions of $\rho_1,\ldots,\rho_m$. If $\pi,\rho_1,\ldots,\rho_m$ non-adaptively perform secure computations, then also $\pi^{\rho_1,\ldots,\rho_m}$ non-adaptively performs a secure computation.
\end{Thm}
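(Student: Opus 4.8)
The plan is to reproduce the standard simulation-based argument for hybrid composition, adapted to the two-party (group versus coalition) formulation of Definition~\ref{hbcsecdef}. By hypothesis $\pi$ securely computes $f_\pi$ in the model where the calls to $f_1,\ldots,f_m$ are answered by a trusted party (the \emph{hybrid model}), so there is a ppt simulator $S_\pi$ whose output on $(1^\kappa,y,f_\pi(x,y))$ is computationally indistinguishable from the coalition's view of $\pi$ in that hybrid execution. Likewise, each $\rho_i$ securely computes $f_i$, so there is a ppt simulator $S_i$ such that, for the fixed non-adaptive coalition, $\{S_i(1^\kappa,y_i,f_i(x_i,y_i))\}\approx_c\{\mathrm{view}^{\rho_i}\}$ for all inputs $(x_i,y_i)$ to a call of $\rho_i$.

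First I would construct the composed simulator $S$ for $\pi^{\rho_1,\ldots,\rho_m}$. Run $S_\pi$ to obtain a simulated hybrid-model view; this view records, for each ideal call, the input the coalition contributed and the output it received. For every such call, expand the abstract trusted-party interaction into a full subprotocol transcript by invoking the matching $S_i$ on exactly those recorded inputs and outputs, and append the results. Since $m=\text{poly}(\kappa)$ and $S_\pi,S_1,\ldots,S_m$ are all ppt, $S$ is ppt.

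Then I would prove $\{S(1^\kappa,y,f_\pi(x,y))\}\approx_c\{\mathrm{view}^{\pi^{\rho_1,\ldots,\rho_m}}\}$ for every fixed $(x,y)$ by a two-stage hybrid argument. In stage one, peel off the subprotocol executions: let $H_0$ be the real execution of $\pi^{\rho_1,\ldots,\rho_m}$ and obtain $H_k$ from $H_{k-1}$ by replacing the transcript of the $k$-th subprotocol call (say of $\rho_i$ on inputs $(x_i,y_i)$) by the output of $S_i(1^\kappa,y_i,f_i(x_i,y_i))$, while still carrying the true output $f_i(x_i,y_i)$ forward so that all later $\pi$-messages are unchanged. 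Consecutive hybrids are indistinguishable by the security of the single $\rho_i$ invoked at that step: a distinguisher for $H_{k-1},H_k$ yields a distinguisher against $\rho_i$, because a reduction can run every other part of the composed execution itself and embed its own challenge (a real versus simulated $\rho_i$ transcript on $(x_i,y_i)$) into the $k$-th call. After $m$ steps, $H_m$ has the coalition's $\pi$-level view distributed exactly as in the hybrid execution of $\pi$ (using correctness of the $\rho_i$, so each call returns $f_i$), with every subprotocol transcript produced by the corresponding $S_i$. Stage two applies the security of $\pi$: its hybrid-model view is $\approx_c S_\pi(1^\kappa,y,f_\pi(x,y))$, and since appending the $S_i$-transcripts is a fixed ppt post-processing of the recorded ideal inputs and outputs, indistinguishability is preserved and the resulting distribution is exactly $S(1^\kappa,y,f_\pi(x,y))$. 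Chaining the $m+1$ computational steps gives the claim.

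The main obstacle I expect is making the per-step reduction in stage one precise, since the input $(x_i,y_i)$ to the $k$-th call is not fixed in advance but is a random variable determined by the partial transcript produced so far. The non-adaptivity hypothesis is exactly what keeps this tractable: because the coalition is fixed before the execution, each $S_i$ only has to simulate for this single corruption pattern, so the reduction may hardwire everything outside the $k$-th call and hand the $\rho_i$-challenger the now well-defined honest and corrupted inputs for that call. One must additionally check that the $m$-fold accumulation of negligible advantages remains negligible, which holds because $m=\text{poly}(\kappa)$, and that the post-processing step relies only on the ideal inputs and outputs that the hybrid-model view already records, so that attaching the $S_i$-transcripts genuinely preserves the computational indistinguishability delivered by $S_\pi$.
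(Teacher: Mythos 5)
The paper offers no proof of this theorem: it is imported as a known result from Canetti \cite{82}, so there is nothing in-paper to compare your argument against line by line. Your two-stage hybrid argument --- first replacing each sequential subprotocol transcript by the output of the corresponding simulator $S_i$ while carrying the true outputs $f_i(x_i,y_i)$ forward, then invoking the hybrid-model simulator $S_\pi$ and appending the $S_i$-transcripts as ppt post-processing, with the non-adaptive corruption pattern justifying the hardwiring in the per-step reduction and $m=\text{poly}(\kappa)$ controlling the accumulated advantage --- is precisely the standard proof of this composition theorem as given in the cited literature, and it is correct.
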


\section{Investcoin}\label{invcnsec}

In this section, we define Distributed Investment Encryption (DIE) and introduce Investcoin. This protocol is build from a combination of the PSA scheme from Figure \ref{DDHEXM1}, the homomorphic Commitment scheme from Figure \ref{pedersenexm} and the Range test from Figure \ref{rangetest}. Moreover, we provide a simple key-generation protocol for the Investcoin protocol that allows the dynamic join and leave of investors and is fault-tolerant towards investors.

\subsection{Definition of DIE}

A DIE scheme is a SMPC protocol that allows a group of $n$ investors to aggregate their investments for a set of $\lambda$ projects in a way that only the sums of the investments for every single project and the sums of the investments of every single investor are revealed. In particular, no sum of investments from a set of investors of size smaller than $n$ to any set of projects is revealed. Additionally, each investor may receive back a return for her particular investments in a way that only the sum of her returns are revealed.

\begin{Def}[Distributed Investment Encryption]
Let $\kappa$ be a security parameter and $n, \lambda\in\mathbb{N}$ with $n=\text{\upshape\sffamily poly}(\kappa)$ and $\lambda=\text{\upshape\sffamily poly}(\kappa)$. A \text{\upshape Distributed}\linebreak \text{\upshape Investment Encryption (DIE) scheme} $\Omega=(\mbox{\upshape\sffamily Setup}, \mbox{\upshape\sffamily DIEEnc}, \mbox{\upshape\sffamily DIECom}, \mbox{\upshape\sffamily DIETes},$\linebreak $\mbox{\upshape\sffamily DIEUnvPay},\mbox{\upshape\sffamily DIEDec}, \mbox{\upshape\sffamily DIEUnvRet})$ is defined by the following ppt algorithms:
\begin{description}
\item \textbf{\mbox{\upshape \sffamily Setup}}: $(\mbox{\upshape\sffamily pp},T,sk_0,sk_1,\ldots,sk_n,pk_1,\ldots,pk_n)\leftarrow \mbox{\upshape\sffamily Setup}(1^\kappa)$ with public parameters $\mbox{\upshape\sffamily pp}$, $T=\{t_1,\ldots,t_\lambda\}$, secret keys $sk_0,sk_1\ldots,sk_n$ and public keys $pk_1,\ldots,pk_n$.
\item \textbf{\mbox{\upshape \sffamily DIEEnc}}: For all $j=1,\ldots,\lambda$ and $i=1,\ldots,n$, choose $x_{i,j}\in\mathcal{D}$ (from some finite set $\mathcal{D}$ parameterised by $\mbox{\upshape\sffamily pp}$) and compute: 
\[c_{i,j}\leftarrow \mbox{\upshape\sffamily DIEEnc}_{sk_i}(t_j,x_{i,j}).\]
\item \textbf{\mbox{\upshape \sffamily DIECom}}: For all $j=1,\ldots,\lambda$ and $i=1,\ldots,n$, choose $r_{i,j}\leftarrow\mathcal{U}(R)$ (i.e. a value chosen uniformly at random from some finite set $R$ parameterised by $\mbox{\upshape\sffamily pp}$) and compute
\[(com_{i,j},\tilde{c}_{i,j})\leftarrow\mbox{\upshape\sffamily DIECom}_{pk_i,sk_i}(x_{i,j}, r_{i,j}).\]
\item \textbf{\mbox{\upshape \sffamily DIETes}}: For all $j=1,\ldots,\lambda$ and $i=1,\ldots,n$: for a public key $pk_i$ and data value $x_{i,j}$, compute 
\[b_{T,i,j}\leftarrow\mbox{\upshape \sffamily DIETes}_{pk_i}(x_{i,j}),\, b_{T,i,j}\in\{0,1\}.\]
\item \textbf{\mbox{\upshape \sffamily DIEUnvPay}}: For all $i=1,\ldots,n$: for a public commitment key $pk_i$, commitment values $com_{i,1},\ldots,com_{i,\lambda}$, a data value $C_i$ and an opening value $D_i$, compute
\[b_{P,i}\leftarrow\mbox{\upshape\sffamily DIEUnvPay}_{pk_i}\left(\prod_{j=1}^\lambda com_{i,j},C_i,D_i\right),\, b_{P,i}\in\{0,1\}.\]
\item \textbf{\mbox{\upshape \sffamily DIEDec}}: For all $j=1,\ldots,\lambda$, ciphertexts $c_{1,j},\ldots,c_{n,j}$, compute
\[X^\prime_j=\mbox{\upshape\sffamily DIEDec}_{sk_0}(t_j,c_{1,j},\ldots,c_{n,j}).\]
\item \textbf{\mbox{\upshape \sffamily DIEUnvRet}}: Generate a public return factor $\alpha_j$ for every $j\in\{1,\ldots,\lambda\}$. For all $i=1,\ldots,n$: for a public commitment key $pk_i$, commitment values $com_{i,1},\ldots,com_{i,\lambda}$, a data value $E_i$ and an opening value $F_i$, compute
\[b_{R,i}\leftarrow\mbox{\upshape\sffamily DIEUnvRet}_{pk_i}\left(\prod_{j=1}^\lambda com_{i,j}^{\alpha_j},E_i,F_i\right),\, b_{R,i}\in\{0,1\}.\]
\end{description}
\end{Def}

The system parameters $\mbox{\upshape\sffamily pp}$ are public and constant for all $t_j\in T$. Every investor $N_i$ encrypts her investment amount $x_{i,j}$ for project $P_j$ with the secret key $sk_i$ in the $\mbox{\upshape \sffamily DIEEnc}$ algorithm and sends the ciphertext to the system administrator. If the system administrator receives the ciphertexts from \textit{all} investors for some $t_j$, it can compute the sum of the investment amounts for $P_j$ using the decryption key $sk_0$ in the decryption algorithm $\mbox{\upshape \sffamily DIEDec}$. After having computed the overall investment amounts for every project, the system administrator needs to know how much each investor has invested in total in order to collect this total amount from the investor's bank account. Each investor $N_i$ claims to the system administrator her total amount $C_i$ invested in all the projects $P_1,\ldots,P_\lambda$ together. In order to prove that $C_i=\sum_{j=1}^\lambda x_{i,j}$, $N_i$ sends a DIE commitment to the system administrator with the $\mbox{\upshape \sffamily DIECom}$ algorithm, where the commitment value $com_{i,j}$ is generated using the public key and the commitment cipher $\tilde{c}_{i,j}$ is generated using the secret key. The system administrator verifies that a combination of all commitments of each investor is valid for $C_i$ with the payment verification algorithm $\mbox{\upshape \sffamily DIEUnvPay}$. Furthermore, $\mbox{\upshape \sffamily DIETes}$ is executed in order to prove that the amounts are larger than or equal $0$. At maturity, for each $j=1,\ldots,\lambda$, each investor $N_i$ should receive back a multiple $\alpha_j x_{i,j}$ of her amount invested in project $P_j$ (e.g. a ROI). The factor $\alpha_j$ is publicly released by the management of project $P_j$ and may denote a rate of return, interest or similar. Each investor $N_i$ claims to the system administrator her total amount $E_i$ to receive back from all projects $P_1,\ldots,P_\lambda$ together. In order to prove that $E_i=\sum_{j=1}^\lambda \alpha_j x_{i,j}$, for all $i=1,\ldots,n$, the system administrator verifies that a combination of all commitments provided previously during the commitment phase (see $\mbox{\upshape \sffamily DIECom}$ algorithm) by each investor together with the corresponding return factors $\alpha_1,\ldots,\alpha_\lambda$ is valid for $E_i$ with the return verification algorithm $\mbox{\upshape \sffamily DIEUnvRet}$.

\subsection{Construction of Investcoin} 

\begin{figure}[hpbt]
\fbox{
\begin{minipage}{13.9cm}
\begin{center}\scriptsize
\underline{Investcoin}
\end{center}\tiny
Let $\kappa$ be a security parameter and $n, \lambda\in\mathbb{N}$ with $n=\text{\upshape\sffamily poly}(\kappa)$ and $\lambda=\text{\upshape\sffamily poly}(\kappa)$. Let $\Sigma=(\text{\mbox{\upshape \sffamily Setup}},\mbox{\upshape \sffamily PSAEnc},\mbox{\upshape \sffamily PSADec})$ be the PSA scheme from Figure $\ref{DDHEXM1}$ and let $\Gamma=(\mbox{\upshape \sffamily GenCom},\mbox{\upshape \sffamily Com},\mbox{\upshape \sffamily Unv})$ be the Commitment scheme from Figure $\ref{pedersenexm}$. We define Investcoin $\Omega=(\mbox{\upshape\sffamily DIESet}, \mbox{\upshape\sffamily DIEEnc}, \mbox{\upshape\sffamily DIECom},\mbox{\upshape\sffamily DIETes}, \mbox{\upshape\sffamily DIEUnvPay},\mbox{\upshape\sffamily DIEDec}, \mbox{\upshape\sffamily DIEUnvRet})$ as follows.
\begin{description}\tiny
\item \textbf{\mbox{\upshape \sffamily DIESet}}: 
\begin{itemize}\tiny
\item The system administrator generates public parameters $\mbox{\upshape\sffamily pp}=\{q,p,H\}$ with primes $q>m\cdot n, p=2q+1$ (where $m=2^l-1$ is the maximum possible amount to invest into a single project by an investor, $l=\text{\upshape\sffamily poly}(\kappa)$) and parameters $T=\{t_0,t_1,\ldots,t_\lambda\}$, $\tilde{T}=\{\tilde{t}_1,\ldots,\tilde{t}_\lambda\}$.
\item The system administrator generates a pair of generators $(h_1,h_2)\in\mathbb{Z}_{p^2}^*\times\mathbb{Z}_{p^2}^*$ with $\text{ord}(h_1)=\text{ord}(h_2)=pq$, sets the public key $pk=(\tilde{T},h_1,h_2)$  (which is the same for all investors) and defines a hash function $H:T\cup\tilde{T}\to\mathcal{QR}_{p^2}$.
\item The system administrator and the investors together generate secret keys $s_0$ and\linebreak $sk_i=(s_i,\tilde{s}_i)\leftarrow_R\mathbb{Z}_{pq}\times\mathbb{Z}_{pq}$ for all $i=1,\ldots,n$ with $s_0\equiv-\sum_{i=1}^n s_i\Mod pq$.
\item The system administrator generates secret parameters $\beta_0,\ldots,\beta_\lambda\leftarrow_R [-q^\prime,q^\prime], q^\prime<q/(m\lambda)$, such that $\prod_{j=0}^\lambda H(t_j)^{\beta_j}\equiv\prod_{j=1}^\lambda H(\tilde{t}_j)^{\beta_j}\equiv 1\Mod p^2$ (see Section $\ref{invkeygensec}$ for the details). It sets\linebreak $sk_0=(s_0,\beta_0,\ldots,\beta_\lambda)$.
\end{itemize}
\item \textbf{\mbox{\upshape \sffamily DIEEnc}}: For all $j=1,\ldots,\lambda$, each investor $N_i$ chooses $x_{i,j}\in[0,m]$ and $x_{i,0}=0$ and for all $j=0,\ldots,\lambda$, $N_i$ sends the following ciphertexts to the system administrator: 
\[c_{i,j}\leftarrow \mbox{\upshape\sffamily DIEEnc}_{sk_i}(t_j,x_{i,j})=\mbox{\upshape\sffamily PSAEnc}_{s_i}(t_j,x_{i,j}).\]
\item \textbf{\mbox{\upshape \sffamily DIECom}}: For all $j=1,\ldots,\lambda$, each $N_i$ chooses $r_{i,j}\leftarrow_R [0,m]$ and sends the following to the system administrator:
\[(com_{i,j},\tilde{c}_{i,j})\leftarrow\mbox{\upshape\sffamily DIECom}_{pk,sk_i}(x_{i,j}, r_{i,j})=(\mbox{\upshape\sffamily Com}_{pk}(x_{i,j},r_{i,j}),\mbox{\upshape\sffamily PSAEnc}_{\tilde{s}_i}(\tilde{t}_j,r_{i,j})).\]
\item \textbf{\mbox{\upshape \sffamily DIETes}}: For all $j=1,\ldots,\lambda$, the algorithm from Figure $\ref{rangetest}$ on $(x_{i,j}, r_{i,j})$ in $\mathcal{QR}_{p^2}$ is executed between the system administrator (as verifier) and each investor $N_i$ (as prover) using $pk$ to compute $b_{T,i,j}\in\{0,1\}$, where $b_{T,i,j}=1$, if the proof is accepted and $b_{T,i,j}=0$, if not. Note that there always exists a random representation $r_{i,j}^{(0)},\ldots,r_{i,j}^{(l-1)}\leftarrow_R [0,m]$, such that $r_{i,j}\equiv\sum_{k=0}^{l-1} r_{i,j}^{(k)}\cdot 2^k\Mod pq$.
\item \textbf{\mbox{\upshape \sffamily DIEUnvPay}}: The system administrator verifies for each investor $N_i$ with commitment values $(com_{i,1},\tilde{c}_{i,1}),\ldots,(com_{i,\lambda},\tilde{c}_{i,\lambda})$ and ciphers $c_{i,1},\ldots,c_{i,\lambda}$ that
\[\mbox{\upshape\sffamily Unv}_{pk}\left(\prod_{j=1}^\lambda com_{i,j}^{\beta_j},A_i,B_i\right)=1,\]
where $A_i=\left(\left(\prod_{j=0}^\lambda c_{i,j}^{\beta_j}\mbox{ mod } p^2\right)-1\right)/p$ and $B_i=\left(\left(\prod_{j=1}^\lambda \tilde{c}_{i,j}^{\beta_j}\mbox{ mod } p^2\right)-1\right)/p$ (else it aborts). Then each investor $N_i$ sends $C_i=\sum_{j=0}^\lambda x_{i,j}$ and $D_i=\sum_{j=1}^\lambda r_{i,j}$ to the system administrator which computes
\[b_{P,i}=\mbox{\upshape\sffamily DIEUnvPay}_{pk}\left(\prod_{j=1}^\lambda com_{i,j},C_i,D_i\right)=\mbox{\upshape\sffamily Unv}_{pk}\left(\prod_{j=1}^\lambda com_{i,j},C_i,D_i\right), b_{P,i}\in\{0,1\}\]
and charges the bank account of $N_i$ with the amount $C_i$, if $b_{P,i}=1$.
\item \textbf{\mbox{\upshape \sffamily DIEDec}}: For all $j=0,\ldots,\lambda$ and ciphertexts $c_{1,j},\ldots,c_{n,j}$, the system administrator computes
\[X_j=\mbox{\upshape\sffamily DIEDec}_{sk_0}(t_j,c_{1,j},\ldots,c_{n,j})=\mbox{\upshape\sffamily PSADec}_{s_0}(t_j,c_{1,j},\ldots,c_{n,j})\]
and verifies that $X_0=0$ (else it aborts).
\item \textbf{\mbox{\upshape \sffamily DIEUnvRet}}: The management of each project $P_j$ publishes a return factor $\alpha_j\in [-q^\prime,q^\prime]$. The system administrator charges the bank account of the management of each project $P_j$ with the amount $\alpha_j X_j$. Each investor $N_i$ sends $E_i=\sum_{j=1}^\lambda \alpha_j x_{i,j}$ and $F_i=\sum_{j=1}^\lambda \alpha_j r_{i,j}$ to the system administrator. If the verification in the $\mbox{\upshape \sffamily DIEUnvPay}$ algorithm has output $1$, the system administrator computes for all $i=1,\ldots,n$:
\[b_{R,i}=\mbox{\upshape\sffamily DIEUnvRet}_{pk}\left(\prod_{j=1}^\lambda com_{i,j}^{\alpha_j},E_i,F_i\right)=\mbox{\upshape\sffamily Unv}_{pk}\left(\prod_{j=1}^\lambda com_{i,j}^{\alpha_j},E_i,F_i\right), b_{R,i}\in\{0,1\}\]
and transfers the amount $E_i$ to the bank account of investor $N_i$, if $b_{R,i}=1$.
\end{description}
\end{minipage}}
\caption{The Investcoin protocol.}\label{invcnprot}
\end{figure}

The $\mbox{\upshape \sffamily DIESet}$ algorithm in Figure \ref{invcnprot} executes the Setup algorithms of the underlying schemes. Additionally, $\mbox{\upshape \sffamily DIESet}$ generates a verification parameter $\beta_j$ for each project $P_j$ (and an additional $\beta_0$ - this will be used for the security against malicious investors) which is only known to the system administration. In Section \ref{invkeygensec} we provide a simple protocol for generating the secrets. The encryption algorithm $\mbox{\upshape \sffamily DIEEnc}$ executes the encryption algorithm of $\Sigma$ and encrypts the amounts invested by $N_i$ into $P_j$. In order to prove that $C_i=\sum_{j=1}^\lambda x_{i,j}$, the $N_i$ execute the commitment algorithm $\mbox{\upshape \sffamily DIECom}$ commiting to the amounts $x_{i,j}$ invested using the randomness $r_{i,j}$ by executing the commitment algorithm of $\Gamma$ and encrypting the $r_{i,j}$ with $\Sigma$. The Range test algorithm $\mbox{\upshape \sffamily DIETes}$ ensures that the investments are larger or equal $0$. The payment verification algorithm $\mbox{\upshape\sffamily DIEUnvPay}$ first verifies that the combination of the commited amounts \textit{in the correct order} is valid for the same combination of amounts encrypted \textit{in the correct order} by executing the verification algorithm of $\Gamma$. If the investor has not cheated, this verification will output $1$ by the homomorphy of $\Gamma$ and the fact that $\prod_{j=0}^\lambda H(t_j)^{\beta_j}=\prod_{j=1}^\lambda H(\tilde{t}_j)^{\beta_j}=1$. The $\mbox{\upshape\sffamily DIEUnvPay}$ algorithm verifies that the combination of commitments is valid for the aggregate $C_i$ of the investments of $N_i$. The decryption algorithm $\mbox{\upshape \sffamily DIEDec}$ then decrypts the aggregated amounts for every project by executing the decryption algorithm of $\Sigma$. After the projects are realised, each investor $N_i$ should receive back a multiple $\alpha_j x_{i,j}$ of her amount invested in each project $P_j$ (e.g. a ROI). The factor $\alpha_j$ is publicly released by the management of project $P_j$ and denotes a rate of return, interest or similar. This value is equal for every investor, since only the investor's stake should determine how much her profit from that project is. If the first check in the $\mbox{\upshape\sffamily DIEUnvPay}$ algorithm has output $1$, the return verification algorithm $\mbox{\upshape\sffamily DIEUnvRet}$ verifies that the combination of commitments and return factors is valid for the claimed aggregate $E_i$ of the returns to receive by $N_i$.\\
We emphasize the low communication effort needed after the $\mbox{\upshape \sffamily DIESet}$ algorithm: every investor sends the messages for $\mbox{\upshape \sffamily DIEEnc}$, $\mbox{\upshape \sffamily DIECom}$, $\mbox{\upshape \sffamily DIETes}$ and $\mbox{\upshape \sffamily DIEUnvPay}$ in one shot to the system, later only the messages for $\mbox{\upshape \sffamily DIEUnvRet}$ have to be sent. Thus, there are only two rounds of communication between the investors and the system. 

\begin{Thm}[Correctness of Investcoin] Let $\Omega$ be the protocol in Figure $\ref{invcnprot}$. Then the following properties hold.
\begin{enumerate}
\item For all $j=1,\ldots,\lambda$ and $x_{1,j},\ldots,x_{n,j}\in[0,m]$:
\[\mbox{\upshape\sffamily DIEDec}_{sk_0}(t_j, \mbox{\upshape\sffamily DIEEnc}_{sk_1}(t_j,x_{1,j}),\ldots,\mbox{\upshape\sffamily DIEEnc}_{sk_n}(t_j,x_{n,j}))=\sum_{i=1}^n x_{i,j}.\]
\item For all $i=1,\ldots,n$ and $x_{i,1},\ldots,x_{i,\lambda}\in[0,m]$:
\begin{align*}
 & \mbox{\upshape\sffamily DIEUnvPay}_{pk}\left(\prod_{j=1}^\lambda com_{i,j},\sum_{j=1}^\lambda x_{i,j},\sum_{j=1}^\lambda r_{i,j}\right)=1\\
\Leftrightarrow & \exists\,(\tilde{c}_{i,1},\ldots,\tilde{c}_{i,\lambda})\,:\,(com_{i,j},\tilde{c}_{i,j})\leftarrow\mbox{\upshape\sffamily DIECom}_{pk,sk_i}(x_{i,j}, r_{i,j})\,\forall\,j=1,\ldots,\lambda.
\end{align*}
\item For all $i=1,\ldots,n$, public return factors $\alpha_1,\ldots,\alpha_\lambda$ and $x_{i,1},\ldots,x_{i,\lambda}\in[0,m]$:
\begin{align*}
 & \mbox{\upshape\sffamily DIEUnvRet}_{pk}\left(\prod_{j=1}^\lambda com_{i,j}^{\alpha_j},\sum_{j=1}^\lambda \alpha_j x_{i,j},\sum_{j=1}^\lambda \alpha_j r_{i,j}\right)=1\\
\Leftrightarrow & \exists\,(\tilde{c}_{i,1},\ldots,\tilde{c}_{i,\lambda})\,:\,(com_{i,j},\tilde{c}_{i,j})\leftarrow\mbox{\upshape\sffamily DIECom}_{pk,sk_i}(x_{i,j}, r_{i,j})\,\forall\,j=1,\ldots,\lambda.
\end{align*}
\end{enumerate}
\end{Thm}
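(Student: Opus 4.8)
The plan is to treat the statement part by part and, within each part, to split the biconditional into its two implications, obtaining the substantive \emph{completeness} direction (honest inputs force the verifier to accept) from the algebraic structure of the building blocks. Part~1 is nothing but the correctness clause of the PSA scheme from Figure~\ref{DDHEXM1}, so I would reduce to it directly. Parts~2 and~3 both reduce to the homomorphy of the Pedersen commitment, the only difference being that Part~3 carries the public weights $\alpha_j$ in the exponent.

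For Part~1, I would first recall that $\mbox{\upshape\sffamily DIEDec}_{sk_0}$ is by definition $\mbox{\upshape\sffamily PSADec}_{s_0}$ and that $\mbox{\upshape\sffamily DIESet}$ fixes $s_0\equiv-\sum_{i=1}^n s_i\Mod pq$, i.e. $\sum_{i=0}^n s_i\equiv 0\Mod pq$. Hence $H(t_j)^{s_0}\prod_{i=1}^n H(t_j)^{s_i}\equiv 1\Mod{p^2}$ and $V_j\equiv\prod_{i=1}^n(1+p\,x_{i,j})\equiv 1+p\sum_{i=1}^n x_{i,j}\Mod{p^2}$. The remaining point is the integer lift: since $x_{i,j}\in[0,m]$ we have $0\le\sum_{i=1}^n x_{i,j}\le mn<q<p$, so $V_j=1+p\sum_{i=1}^n x_{i,j}$ already lies in the decryption window $\{1,\ldots,1+p\,mn\}$ and $(V_j-1)/p$ returns the sum over the integers. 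This is exactly the correctness of the scheme in Figure~\ref{DDHEXM1}, which I would simply invoke.

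For Parts~2 and~3 the engine is $\mbox{\upshape\sffamily Com}_{pk}(x,r)=h_1^x h_2^r$ together with its homomorphy. In Part~2, assuming the right-hand side, each $com_{i,j}=h_1^{x_{i,j}}h_2^{r_{i,j}}$, so $\prod_{j=1}^\lambda com_{i,j}=h_1^{\sum_j x_{i,j}}h_2^{\sum_j r_{i,j}}$, whence $\mbox{\upshape\sffamily Unv}_{pk}$, and therefore $\mbox{\upshape\sffamily DIEUnvPay}_{pk}$, accepts on $(\sum_j x_{i,j},\sum_j r_{i,j})$. For Part~3 I would first observe $com_{i,j}^{\alpha_j}=h_1^{\alpha_j x_{i,j}}h_2^{\alpha_j r_{i,j}}$ and then aggregate to $\prod_{j=1}^\lambda com_{i,j}^{\alpha_j}=h_1^{\sum_j\alpha_j x_{i,j}}h_2^{\sum_j\alpha_j r_{i,j}}$, so $\mbox{\upshape\sffamily DIEUnvRet}_{pk}$ accepts on $(E_i,F_i)$. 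I would note that all exponent arithmetic is consistent modulo $pq=\mathrm{ord}(h_1)=\mathrm{ord}(h_2)$, so acceptance of the unveil needs no range assumption on the aggregates; the bounds $q'<q/(m\lambda)$ are only needed elsewhere for the wrap-free recovery of the plaintext aggregates, not for the verifier's equality check. In each case the existential over $(\tilde c_{i,1},\ldots,\tilde c_{i,\lambda})$ is immediate, since $\tilde c_{i,j}=\mbox{\upshape\sffamily PSAEnc}_{\tilde s_i}(\tilde t_j,r_{i,j})$ is a deterministic function of $r_{i,j}$ and hence always exists.

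The delicate point, and the one I would handle with care, is the forward implication $(\Rightarrow)$. Because the $\tilde c_{i,j}$ are determined by the $r_{i,j}$, the right-hand side collapses to the per-index condition $com_{i,j}=\mbox{\upshape\sffamily Com}_{pk}(x_{i,j},r_{i,j})$ for every $j$, whereas acceptance only constrains the \emph{product} $\prod_j com_{i,j}$, respectively $\prod_j com_{i,j}^{\alpha_j}$. For $\lambda=1$ the two coincide and the biconditional is immediate; for $\lambda>1$ the per-index conclusion is obtained in the correctness regime, where the $com_{i,j}$ are the honestly generated commitments output by $\mbox{\upshape\sffamily DIECom}$, while in the adversarial regime it is precisely the computational binding of the Pedersen commitment, together with the Range test pinning each $x_{i,j}\in[0,m]$, that prevents a cheating prover from presenting a different factorisation of the accepted product. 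I would therefore prove $(\Rightarrow)$ relative to honestly formed commitments and defer the cheating-prover case to the binding and soundness analysis, flagging this as the only non-routine step.
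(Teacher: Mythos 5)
Your proposal is correct and follows essentially the same route as the paper, whose entire proof is two sentences: Part~1 is delegated to the correctness of the PSA scheme of Figure~\ref{DDHEXM1} (your key-telescoping $\sum_{i=0}^n s_i\equiv 0\Mod pq$ and the integer-lift bound $mn<q$ are exactly what that correctness clause packages), and Parts~2 and~3 are delegated to the correctness and homomorphy of the Pedersen scheme of Figure~\ref{pedersenexm}, which is your computation $\prod_{j=1}^\lambda com_{i,j}=h_1^{\sum_j x_{i,j}}h_2^{\sum_j r_{i,j}}$ and its $\alpha_j$-weighted analogue. The one place you go beyond the paper is the forward implication, and your caution there is warranted: as literally stated, acceptance of $\mbox{\upshape\sffamily DIEUnvPay}$ constrains only the product $\prod_j com_{i,j}$, so for $\lambda>1$ one can shift mass across indices (commit to $x_{i,1}+1$ and $x_{i,2}-1$) without changing the product, making the left-hand side accept while the right-hand side fails; the paper's proof silently reads the theorem in the honest-execution regime, where the $com_{i,j}$ are the outputs of $\mbox{\upshape\sffamily DIECom}$ on the stated $(x_{i,j},r_{i,j})$, exactly the restriction you adopt. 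Your deferral of the cheating-prover case is likewise consistent with the paper's architecture: protection against a dishonest factorisation of the accepted product is not claimed in the correctness theorem at all, but is Property~$2$ of Definition~\ref{ppisecurity} (linkage), proved in Lemma~\ref{linkage}. One small correction on that point: computational binding plus the Range test do not by themselves rule out cross-index mass shifting (binding only forbids two openings of a \emph{single} commitment, and shifted values can pass every range check); what actually pins down the individual $x_{i,j}$ in the adversarial regime is the verification against the secret parameters $\beta_0,\ldots,\beta_\lambda$ via Equation~\eqref{newverificationparameters}, which is precisely why the paper introduces them. With that attribution adjusted, your proof is a faithful and usefully more explicit rendering of the paper's argument.
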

\begin{proof}
The first correctness property is given by the correctness of the PSA scheme from Figure \ref{DDHEXM1}. The second and third correctness properties are given by the correctness and the homomorphy of the Commitment scheme from Figure \ref{pedersenexm}.
\hfill$\qed$\end{proof}

By the first property, the decryption of all ciphers results in the sum of the amounts they encrypt. So the projects receive the correct investments. By the second property, the total investment amount of each investor is accepted by the system if the investor has commited to it. Thus, the investor's account will be charged with the correct amount. By the third property, the total return to each investor is accepted by the system if the investor has commited to the corresponding investment amount before. Thus, the investor will receive the correct return on investment (ROI).

\subsection{Generation of public parameters and secret key generation protocol}\label{invkeygensec}

In this section, we show how the system sets the random oracle $H:T\to\mathcal{QR}_{p^2}$ and we provide a decentralised key generation protocol for Investcoin. It supports dynamic join, dynamic leave and fault-tolarance of investors using one round of communication between the investors. In the Section \ref{secproofsec}, we will show how to use the public parameters and the secret key generation protocol for the security of Investcoin.

\subsubsection{Setting the random oracle.}

Recall that we need to generate public parameters, a random oracle $H:T\to\mathcal{QR}_{p^2}$ and secret parameters $\beta_0,\ldots,\beta_\lambda\leftarrow_R [-q^\prime,q^\prime]$, $q^\prime<q/(m\lambda)$, such that for $t_0,\ldots,t_\lambda,\tilde{t}_1,\ldots,\tilde{t}_\lambda\in T$ the following equation holds.
\begin{equation}\label{newverificationparameters}
\prod_{j=0}^\lambda H(t_j)^{\beta_j}=\prod_{j=1}^\lambda H(\tilde{t}_j)^{\beta_j}=1.
\end{equation}
First, for $j=0,\ldots,\lambda-2$, on input $t_j$ let $H(t_j)$ be random in $\mathcal{QR}_{p^2}$ and for $j=1,\ldots,\lambda-2$, on input $\tilde{t}_j$ let $H(\tilde{t}_j)$ be random in $\mathcal{QR}_{p^2}$. The system chooses $\beta_0,\ldots,\beta_{\lambda-2}\leftarrow_R [-q^\prime,q^\prime], \beta_{\lambda-1}\leftarrow_R [-q^\prime,q^\prime-1]$ as part of its secret key (note that choosing these values according to a different distribution gives no advantage to the system). Then it computes
\begin{align*} (H(t_{\lambda-1}),H(\tilde{t}_{\lambda-1})) & =\left(\prod_{j=0}^{\lambda-2} H(t_j)^{\beta_j},\prod_{j=1}^{\lambda-2} H(\tilde{t}_j)^{\beta_j}\right),\\
(H(t_\lambda),H(\tilde{t}_\lambda),\beta_\lambda) & =(H(t_{\lambda-1}),H(\tilde{t}_{\lambda-1}),-1-\beta_{\lambda-1}),
\end{align*}
instructs each investor $N_i$ to set $x_{i,\lambda-1}=x_{i,\lambda}=0$ and sets $\alpha_{\lambda-1}=\alpha_\lambda=1$. In this way Equation \eqref{newverificationparameters} is satisfied. The projects $P_{\lambda-1}, P_\lambda$ deteriorate to 'dummy-projects' (e.g. if any investor decides to set $x_{i,\lambda}> 0$, then the system simply collects $X_\lambda>0$).

\subsubsection{Key generation for PSA within Investcoin.}

The building block for a key generation protocol is a $n-1$ out of $n$ secret sharing scheme between the investors and the system. It is executed before the first investment round in the following way.\\
For all $i=1,\ldots,n$, investor $N_i$ generates uniformly random values $s_{i,1},\ldots,s_{i,n}$ from the key space and sends $s_{i,i^\prime}$ to $N_{i^\prime}$ for all $i^\prime=1,\ldots,n$ via secure channel. Accordingly, each investor $N_{i^\prime}$ obtains the shares $s_{1,i^\prime},\ldots,s_{n,i^\prime}$. Then each investor $N_i$ sets the own secret key $s_i=\sum_{i^\prime=1}^n s_{i,i^\prime}$ and each investor $N_{i^\prime}$ sends $\sum_{i=1}^n s_{i,i^\prime}$ to the system. The system then computes
\[s_0=-\sum_{i^\prime=1}^n\left(\sum_{i=1}^n s_{i,i^\prime}\right)=-\sum_{i=1}^n\left(\sum_{i^\prime=1}^n s_{i,i^\prime}\right)=-\sum_{i=1}^n s_i.\]
By the secret sharing property this is a secure key generation protocol in the sense that only $N_i$ knows $s_i$ for all $i=1,\ldots,n$ and only the system knows $s_0$.\\
For key generation, each investor has to send one message to every other investor and one message to the system which makes $n^2$ messages for the total network.\\
As a drawback, note that the key of each single investor is controlled by the other investors together with the system: for example, if $N_1,\ldots,N_{n-1}$ (maliciously) send the shares $s_{1,n},\ldots,s_{n-1,n}$ and $s_{n,1},\ldots,s_{n,n-1}$ to the system, it can compute the entire key $s_n$ of $N_n$.\\
Assume that before the start of an arbitrary investment round, new investors want to join the network or some investors want to leave the network or some investors fail to send the required ciphers. In order to be able to carry out the protocol execution, the network can make a key update that requires $O(n)$ messages (rather than $O(n^2)$ messages for a new key setup) using the established secret sharing scheme. Due to space limitations, we omit the (simple) details.

\section{Security of Investcoin}

\subsection{Definition of security}

The administration is honest-but-curious and may compromise investors to build a coalition for learning the values of uncompromised investors. Investors who are not in the coalition may try to execute the following (reasonable) malicious behaviour:
\begin{enumerate}
\item Use different values for $x_{i,j}$ in $\mbox{\upshape\sffamily DIEEnc}$ and $\mbox{\upshape\sffamily DIECom}$ in order to have a larger profit than allowed.
\item Invest negative amounts $x_{i,j}<0$ in order to 'steal' funds from the projects.
\item Use different parameters than generated in the Setup-phase (i.e. send inconsistent or simply random messages) in order to distort the whole computation.
\end{enumerate}

\begin{Def}[Privacy-Preserving Investment system]\label{ppisecurity}
Let $\kappa$ be a security parameter and $n, \lambda\in\mathbb{N}$ with $n=\text{\upshape\sffamily poly}(\kappa)$ and $\lambda=\text{\upshape\sffamily poly}(\kappa)$. A DIE scheme $\Omega=(\mbox{\upshape\sffamily DIESet}, \mbox{\upshape\sffamily DIEEnc},$ $\mbox{\upshape\sffamily DIECom}, \mbox{\upshape\sffamily DIETes}, \mbox{\upshape\sffamily DIEUnvPay}, \mbox{\upshape\sffamily DIEDec}, \mbox{\upshape\sffamily DIEUnvRet})$ consisting of ppt algorithms and executed between a group of uncorrupted parties of size $u\leq n$ and a coalition of honest-but-curious adversaries of size $n-u+1$ is a \text{\upshape Privacy-Preserving Investment} (\mbox{\upshape\sffamily PPI}) system, if the following properties hold.
\begin{enumerate}
\item Let $f_\Omega$ be the deterministic functionality computed by $\Omega$. Then $\Omega$ performs a secure computation (according to Definition $\ref{hbcsecdef}$).
\item $\Omega$ provides \text{\upshape linkage}, i.e. for all $i=1,\ldots,n$, $sk_i, pk_i, C_i, D_i, E_i, F_i,$\linebreak $T=\{t_1,\ldots,t_\lambda\}$, $(x_{i,j})_{j=1,\ldots,\lambda}$, $(com_{i,j})_{j=1,\ldots,\lambda}$, $(\alpha_j)_{j=1,\ldots,\lambda}$ the following holds: if
\begin{align*}
& c_{i,j}\leftarrow \mbox{\upshape\sffamily DIEEnc}_{sk_i}(t_j,x_{i,j})\,\forall\, j=1,\ldots,\lambda\\
\wedge & \mbox{\upshape\sffamily DIEUnvPay}_{pk_i}\left(\prod_{j=1}^\lambda com_{i,j},C_i,D_i\right)=1\\
\wedge & \mbox{\upshape\sffamily DIEUnvRet}_{pk_i}\left(\prod_{j=1}^\lambda com_{i,j}^{\alpha_j},E_i,F_i\right)=1,
\end{align*}
then
\[C_i=\sum_{j=1}^\lambda x_{i,j}\wedge E_i=\sum_{j=1}^\lambda \alpha_j x_{i,j}.\]
\item For all $i=1,\ldots,n$ and $j=1,\ldots,\lambda$:  $\mbox{\upshape \sffamily DIETes}_{pk_i}(x_{i,j})=1$ iff $x_{i,j}\geq 0$.
\item For all $i=1,\ldots,n$, there is a ppt distinguisher $\mathcal{D}_i$, such that the following holds. For any $x_{i,j},r_{i,j}\in [0,m]$, let $c_{i,j}\leftarrow\mbox{\upshape\sffamily DIEEnc}_{sk_i}(t_j,x_{i,j})$\linebreak and $(com_{i,j},\tilde{c}_{i,j})\leftarrow\mbox{\upshape\sffamily DIECom}_{pk_i,sk_i}(x_{i,j}, r_{i,j})$ for all $j=1,\ldots,\lambda$.\linebreak For any $x^\prime_{i,j},r^\prime_{i,j}\in [0,m]$, let $c^\prime_{i,j}\leftarrow \mbox{\upshape\sffamily DIEEnc}_{sk^{(j)}_i}(t^\prime_j,x^\prime_{i,j})$ and\linebreak $(com^\prime_{i,j},\tilde{c}^\prime_{i,j})\leftarrow\mbox{\upshape\sffamily DIECom}_{pk^{(j)}_i,sk^{(j)}_i}(x^\prime_{i,j}, r^\prime_{i,j})$, for all $j=1,\ldots,\lambda$, such that\linebreak $(pk^{(j)}_i,sk^{(j)}_i,t^\prime_j)\neq(pk_i,sk_i,t_j)$ for at least one $j\in[\lambda]$, i.e. at least one entry of at least one tuple is different. Then
\begin{align*} & \left|\Pr\left[\mathcal{D}_i\left(1^\kappa,sk_0,c_{i,1},com_{i,1},\tilde{c}_{i,1},\ldots,c_{i,\lambda},com_{i,\lambda},\tilde{c}_{i,\lambda}\right)=1\right]\right.\\
- & \left.\Pr\left[\mathcal{D}_i\left(1^\kappa,sk_0,c^\prime_{i,1},com^\prime_{i,1},\tilde{c}^\prime_{i,1},\ldots,c^\prime_{i,\lambda},com^\prime_{i,\lambda},\tilde{c}^\prime_{i,\lambda}\right)=1\right]\right|\\
\geq & 1-\mbox{\upshape\sffamily neg}(\kappa).
\end{align*}
The probability space is defined over the internal randomness of $\mathcal{D}_i$.
\end{enumerate}
\end{Def}

The definition is twofold: on the one hand it covers the security of honest investors against a honest-but-curious coalition consisting of the untrusted system administration and compromised investors (Property $1$) and on the other hand it covers the security of the system against maliciously behaving investors (Properties $2,3,4$). Note that we have to distinguish between these two requirements, since we assume different behaviours for the two groups of participants, i.e. we cannot simply give a real-world-ideal-world security proof as in the SMPC literature in the malicious model. Instead, we follow the notions of the SMPC literature \cite{81} for the security of honest investors (only) in the honest-but-curious model and additionally provide security notions against some reasonably assumable behaviour of malicious investors. For the security against a honest-but-curious coalition, the first property ensures that from the outcome of the decryption no other information than $X_j$ can be detected for all $j=1,\ldots,\lambda$ and that the single amounts comitted to by the investors for payment and return are hidden. For the security of the system against maliciously behaving investors, imagine the situation where an investor $N_i$ claims to having payed amount $x_{i,j}$ to project $P_j$ (in the $\mbox{\upshape\sffamily DIECom}$ algorithm) but in fact has only payed $\tilde{x}_{i,j}<x_{i,j}$ (in the $\mbox{\upshape\sffamily DIEEnc}$ algorithm). If the return factor $\alpha_j$ is larger than $1$, then $N_i$ would unjustly profit more from her investment than she actually should and the system would have a financial damage.\footnote{Usually the investor cannot know if $\alpha_j>1$ at the time of cheating, since it becomes public in the future. However, in the scenario where a priori information about $\alpha_j$ is known to some investors or where investors simply act maliciously, we need to protect the system from beeing cheated.} Therefore the second property says that for all $i=1,\ldots,n$ with overwhelming probability, whenever $x_{i,1},\ldots,x_{i,\lambda}$ were send by $N_i$ using the $\mbox{\upshape\sffamily DIEEnc}$ algorithm and $\mbox{\upshape\sffamily DIEUnvPay}$, $\mbox{\upshape\sffamily DIEUnvRet}$ accept $C_i, E_i$ respectively, then $C_i$ and $E_i$ must be the legitimate amounts that $N_i$ respectively has invested in total and has to get back as return in total. The third property ensures that no investor is able to perform a negative investment. The fourth property ensures that all investors use the correct parameters as generated by the \mbox{\upshape \sffamily DIESet} algorithm.\footnote{More precisely, it ensures that a cheating investor will be identified by the system.}

\subsection{Proof of security}\label{secproofsec}

First, we concentrate on the security against the honest-but-curious coalition (Property $1$) and then show security against malicious investors (Propertis $2,3,4$).

\begin{Thm} By the DDH assumption in the group $\mathcal{QR}_{p^2}$ of quadratic residues modulo $p^2$ for a safe prime $p$, Investcoin is a \text{\upshape\sffamily PPI} system in the random oracle model.
\end{Thm}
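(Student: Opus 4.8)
The plan is to verify the four defining properties of a \mbox{\upshape\sffamily PPI} system (Definition \ref{ppisecurity}) one at a time, with each reducing to DDH or to a property derived from it. The privacy requirement (Property $1$) is the structural heart of the argument, and I would handle it through the composition theorem (Theorem \ref{compthmhbcsec}). First I would decompose $f_\Omega$ into the functionalities of the three building blocks: the summation functionality computed by the PSA scheme (applied both to the amounts $x_{i,j}$ and to the commitment randomness $r_{i,j}$), the output-less functionality of the Pedersen commitments, and the range-membership functionality realised by the Range test. I would then show that each component performs a secure computation in the honest-but-curious model. For the PSA part I would argue that Aggregator Obliviousness (Definition \ref{securitygame1}), which the scheme of Figure \ref{DDHEXM1} satisfies under DDH in the random oracle model, is precisely what lets one build a simulator that, given only the aggregates $X_j=\sum_i x_{i,j}$ and $\sum_i r_{i,j}$, produces ciphertexts indistinguishable from the real ones; for the commitments I would invoke the perfect hiding of the Pedersen scheme; and for the Range test I would invoke its special honest-verifier zero-knowledge property, made non-interactive via Fiat--Shamir. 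Composing these simulators yields a single simulator $\mathcal{S}$ for $\Omega$, establishing Property $1$.

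For the binding requirement (Property $2$) I would exploit the secret verification parameters $\beta_0,\ldots,\beta_\lambda$ together with the computational binding of the Pedersen commitment. The crucial computation is that, since the encryptions are of the $x_{i,j}$ and the parameters satisfy $\prod_{j=0}^\lambda H(t_j)^{\beta_j}\equiv 1 \bmod p^2$, the single-investor key contributions factor out as $(\prod_j H(t_j)^{\beta_j})^{s_i}=1$ in $\prod_j c_{i,j}^{\beta_j}$, so the quantity $A_i$ computed in $\mbox{\upshape\sffamily DIEUnvPay}$ equals $\sum_j \beta_j x_{i,j}$ and likewise $B_i=\sum_j\beta_j r_{i,j}$, while $\prod_j com_{i,j}^{\beta_j}=\mbox{\upshape\sffamily Com}_{pk}(\sum_j\beta_j x_{i,j},\sum_j\beta_j r_{i,j})$ by homomorphy. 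The first unveiling in $\mbox{\upshape\sffamily DIEUnvPay}$ therefore ties the committed values to the encrypted values in this secret random linear combination, and because $\beta$ is unknown to the investor, the check can be passed only by committing to the actual $x_{i,j}$ except with negligible probability. The second unveiling, together with binding, then pins $C_i=\sum_j x_{i,j}$, and the identical argument applied with the public factors $\alpha_j$ in $\mbox{\upshape\sffamily DIEUnvRet}$ yields $E_i=\sum_j\alpha_j x_{i,j}$.

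Properties $3$ and $4$ are more direct. For Property $3$ I would invoke the completeness and special soundness of the Range test of Figure \ref{rangetest}: completeness forces acceptance whenever $x_{i,j}$ is a genuine non-negative $l$-bit integer, i.e. $x_{i,j}\in[0,2^l-1]=[0,m]$, and special soundness, inherited from the extended Schnorr proof of Figure \ref{extschnorr}, guarantees that a value outside this range is rejected except with negligible probability, which under the admissible bound $m$ is exactly the statement $x_{i,j}\ge 0$. For Property $4$ I would let the distinguisher $\mathcal{D}_i$ use the decryption key $sk_0=(s_0,\beta_0,\ldots,\beta_\lambda)$ to re-run the consistency checks of $\mbox{\upshape\sffamily DIEDec}$ and $\mbox{\upshape\sffamily DIEUnvPay}$: on correctly generated parameters the relation $\prod_{j=0}^\lambda H(t_j)^{\beta_j}\equiv 1$ makes the key contributions cancel and every check succeeds, whereas if any of $pk_i^{(j)},sk_i^{(j)},t_j'$ differs, the cancellation fails, since the random oracle makes $H(t_j')$ independent of $H(t_j)$ and a mismatched key or generator leaves a non-trivial residue, so $\mathcal{D}_i$ separates the two distributions with overwhelming probability.

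I expect Property $1$ to be the main obstacle. The delicate points are matching the two-party honest-but-curious abstraction of Definition \ref{hbcsecdef} to the multi-investor setting, ensuring that the non-adaptive hypotheses of the composition theorem are met (which is why the non-adaptive version of $\mbox{\upshape\sffamily AO}$ is the relevant one), and arguing that simulating the two intertwined PSA instances (one for the $x_{i,j}$, one for the $r_{i,j}$) together with the commitments and the range-proof transcripts leaks nothing about the individual $x_{i,j}$ beyond the permitted aggregates. The remaining properties reduce cleanly to binding, special soundness and the parameter relation of Equation \eqref{newverificationparameters}, all of which ultimately rest on the DDH assumption.
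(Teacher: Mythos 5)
Your treatment of Properties $2$, $3$ and $4$ tracks the paper's Lemmas on linkage, positivity and parameter consistency quite closely (for Property $4$ the paper additionally pins down $c_{i,0}$ via the published shares $T_{i,i^\prime}$ from the key generation, and replaces your ``non-trivial residue'' intuition by an explicit appeal to the computational binding of the Pedersen commitment -- without that binding step a cheating investor could conceivably fabricate commitments that still verify). The genuine gap is in Property $1$, which you yourself identify as the heart of the argument. Your decomposition into ``PSA $+$ output-less commitments $+$ range test'' misses the fact that the coalition holds $sk_0=(s_0,\beta_0,\ldots,\beta_\lambda)$ and therefore learns, for \emph{every single uncorrupted investor}, the secret linear combinations $A_i=\sum_{j}\beta_j x_{i,j}$ and $B_i=\sum_j\beta_j r_{i,j}$: these are computable from the ciphertexts alone because $\prod_{j=0}^\lambda H(t_j)^{\beta_j}=\prod_{j=1}^\lambda H(\tilde{t}_j)^{\beta_j}=1$ makes the key contribution $H(t_j)^{s_i\beta_j}$ cancel per investor inside $\prod_j c_{i,j}^{\beta_j}$. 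The paper isolates this leakage as a separate functionality $f_{\boldsymbol\beta}$ (Lemma \ref{betacomp}) and its simulator in Lemma \ref{h-b-c} must choose the fake plaintexts and randomness subject to \emph{both} the column constraints $\sum_{i=1}^u x^\prime_{i,j}=X_j-\sum_{i=u+1}^n x_{i,j}$ and the per-investor row constraints $\sum_j\beta_j x^\prime_{i,j}=A_i$, $\sum_j\beta_j r^\prime_{i,j}=B_i$, with matching $C^\prime_i,D^\prime_i,E^\prime_i,F^\prime_i$. A simulator that, as you propose, matches only the aggregates $X_j$ produces a view the coalition distinguishes immediately: it recomputes the $\beta$-weighted products from the simulated ciphertexts and compares them against the values it sees in the real \mbox{\upshape\sffamily DIEUnvPay} interaction.

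A second, related misstatement: you give your simulator the sums $\sum_i r_{i,j}$, but the system never learns these -- there is no aggregate decryption key for the $\tilde{c}_{i,j}$ (the keys $\tilde{s}_i$ are unconstrained and $sk_0$ contains no $\tilde{s}_0$); the randomness enters the administrator's view only through the per-investor values $B_i$. Feeding the simulator information the functionality does not output invalidates it under Definition \ref{hbcsecdef}, and conversely it is exactly the missing $A_i,B_i$ that it \emph{does} need. So the composition skeleton is right, but the decomposition must include $f_{\boldsymbol\beta}$ (alongside $f_\Sigma$ and the extended-Schnorr functionality $f_\Upsilon$) for the proof of Property $1$ to go through.
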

\begin{proof} The proof follows from Lemma $\ref{h-b-c}$, Lemma $\ref{linkage}$, Lemma $\ref{positiveness}$ and Lemma $\ref{correctparams}$ below.
\hfill$\qed$\end{proof}

\subsubsection{Security against honest-but-curious coalition.}

Investcoin is the combination of the protocols described in Figures $\ref{DDHEXM1}, \ref{pedersenexm}, \ref{extschnorr}$.\footnote{The Range test in Figure \ref{rangetest} is the combination of the protocols in Figure \ref{pedersenexm} and \ref{extschnorr}. Therefore it suffices to show the security of these protocols.} We first show the security of the underlying protocols seperately and then use Theorem \ref{compthmhbcsec} in order to show composition. In this section, assume without loss of generality that the indices $i=1,\ldots,u$ belong to the group of uncorrupted investors and the indices $i=u+1,\ldots,n$ belong to the investors corrupted by the system administrator. Before showing security, we briefly explain the functionalities used in the proofs. In the following, computations are performed modulo $p^2$.
\begin{itemize}
\item The functionality $f_\Sigma$ is computed by the protocol in Figure $\ref{DDHEXM1}$. It takes as input $n$ values $x_1,\ldots,x_n$ and outputs their sum \[f_\Sigma(x_1,\ldots,x_n)=\sum_{i=1}^n x_i.\]
\item The functionality $f_\Gamma$ is computed by the protocol in Figure $\ref{pedersenexm}$. It takes as input three values $com,x,r$ and outputs $b=1$ iff $(x,r)$ is a valid opening for the commitment $com$, else $b=0$.
\item The functionality $f_\Upsilon$ is computed by the protocol in Figure $\ref{extschnorr}$. It takes as input the values $r,R$ and outputs $b=1$ iff $r$ is a discrete logarithm of either $R$ or $S=R\cdot g^{-1}$ with base $h$, else $b=0$.
\item For $\boldsymbol\beta=(\beta_0,\ldots,\beta_\lambda)$, the functionality $f_{\boldsymbol\beta}$ is computed by the $\text{\upshape\sffamily DIEUnvPay}$ algorithm in Figure \ref{invcnprot} (using $\beta_0,\ldots,\beta_\lambda$ and $(c_{i,j}, \tilde{c}_{i,j})$ for all $i=1,\ldots,n$, $j=1,\ldots,\lambda$ generated by $\text{\upshape\sffamily DIEEnc}, \text{\upshape\sffamily DIECom}$ respectively). It takes as input the values $x_0$, $x_1, r_1,\ldots,x_\lambda, r_\lambda$ and outputs $A=\sum_{j=0}^\lambda \beta_j x_j$ and $B=\sum_{j=1}^\lambda \beta_j r_j$.
\end{itemize}

\begin{Lem}\label{PSAseclem} Under the DDH assumption, the protocol $\Sigma$ in Figure~$\ref{DDHEXM1}$ securely computes the functionality $f_\Sigma$ in the random oracle model.
\end{Lem}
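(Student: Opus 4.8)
The plan is to construct a ppt simulator $\mathcal{S}$ for the view of the honest-but-curious coalition and to prove indistinguishability from the real view by reduction to Aggregator Obliviousness (Definition \ref{securitygame1}), which the scheme of Figure \ref{DDHEXM1} enjoys under DDH in the random oracle model. Throughout I place the corrupted investors at indices $i=u+1,\ldots,n$ and the honest ones at $i=1,\ldots,u$, so the coalition input is $y=(s_0,(s_i)_{i>u},(x_{i,j})_{i>u,j})$; the messages $m$ in $\text{\upshape\sffamily view}^\Sigma$ are the honest ciphertexts $c_{i,j}=(1+p\,x_{i,j})H(t_j)^{s_i}$ for $i\le u$ and all $t_j\in T$, together with the random-oracle answers $H(t_j)$; and $f_\Sigma$ returns the aggregates $X_j=\sum_{i=1}^n x_{i,j}$. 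Since the partition into honest and corrupted investors is fixed in advance, I work in the non-adaptive setting, which is what Theorem \ref{compthmhbcsec} requires.

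First I would describe $\mathcal{S}$ on input $(1^\kappa,y,(X_j)_j)$. It answers every fresh random-oracle query $t_j$ with a uniform $H(t_j)\leftarrow_R\mathcal{QR}_{p^2}$; it computes the honest partial sums $S_j=X_j-\sum_{i>u}x_{i,j}$ and the aggregate honest key $\sigma\equiv -s_0-\sum_{i>u}s_i \pmod{pq}$, both determined by $y$ and the functionality output; it samples honest keys $\hat s_1,\ldots,\hat s_u\leftarrow_R\mathbb{Z}_{pq}$ conditioned on $\sum_{i=1}^u\hat s_i\equiv\sigma$; and for each $j$ it fixes a decomposition $S_j=\sum_{i=1}^u\hat x_{i,j}$ with each $\hat x_{i,j}\in\mathcal{D}$ (possible since $S_j$ lies in the range of $u$-fold sums over $\mathcal{D}$) and sets $\hat c_{i,j}=(1+p\,\hat x_{i,j})H(t_j)^{\hat s_i}$, assembling the view $(y,r,m)$ with a fresh random tape $r$ and $m$ consisting of these ciphertexts and the oracle answers. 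Because $\sum_{i=1}^u\hat x_{i,j}=S_j$ and the sampled keys satisfy the global sum relation, one checks that $H(t_j)^{s_0}\prod_{i=1}^n\hat c_{i,j}\equiv 1+p\,X_j \pmod{p^2}$, so the transcript is consistent with the aggregate the coalition decrypts. Moreover the conditional law of the real honest keys given $y$ is exactly uniform subject to $\sum_{i\le u}s_i\equiv\sigma$, i.e. the law from which $\mathcal{S}$ samples; hence $\mathcal{S}$'s output is distributed identically to the hybrid in which the honest ciphertexts encrypt $(\hat x_{i,j})$ under the true keys.

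The only distributional gap between the real view and $\mathcal{S}$ is therefore that the honest ciphertexts encrypt the genuine tuple $(x_{i,j})_{i\le u}$ versus the rebalanced tuple $(\hat x_{i,j})_{i\le u}$, which share the same per-step sum $S_j$. I close this gap by a hybrid over the steps $t_1,\ldots,t_\lambda$: in $H_k$ the honest ciphertexts encrypt $\hat x$ for $t_1,\ldots,t_k$ and $x$ for $t_{k+1},\ldots,t_\lambda$, so $H_0$ is the real view and $H_\lambda$ is $\mathcal{S}$'s output. A distinguisher $\mathcal{A}$ for $H_{k-1}$ and $H_k$ gives a non-adaptive AO adversary $\mathcal{T}$ that compromises $u+1,\ldots,n$, obtains the honest ciphertexts for all $t_j$ with $j\neq k$ through ordinary encryption queries (with values $\hat x$ for $j<k$ and $x$ for $j>k$), and then submits $U=\{1,\ldots,u\}$ together with the two equal-sum tuples $(x_{i,k})_{i\in U}$ and $(\hat x_{i,k})_{i\in U}$ as its challenge at the step $t_k$; this challenge is admissible precisely because no encryption query is ever issued for $t_k$. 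Relaying the coalition's oracle queries to the AO challenger, $\mathcal{T}$ reproduces $\mathcal{A}$'s advantage, so AO forces each hybrid gap to be negligible, and summing the $\lambda=\text{poly}(\kappa)$ gaps yields $\{\mathcal{S}(1^\kappa,y,f_\Sigma(x,y))\}\approx_c\{\text{\upshape\sffamily view}^\Sigma(x,y,\kappa)\}$.

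I expect the main obstacle to be the careful bookkeeping that keeps the reduction non-adaptive and the oracle consistent: the set $U$ and all plaintexts used in a given hybrid must be committed before any query, and the values $H(t_j)$ appearing inside the simulated messages must coincide with those the coalition later reads off the oracle, so $\mathcal{S}$ (and, in the reduction, $\mathcal{T}$ via the AO challenger) must program $H$ once and reuse it. The remaining verifications, namely that a decomposition into $\mathcal{D}$ exists, that the key constraint makes the decryption identity hold, and that the challenge tuples have equal sums, are routine once this scaffolding is in place.
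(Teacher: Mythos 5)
Your proposal is correct and follows essentially the same route as the paper: you build the same simulator that replaces the honest plaintexts by an arbitrary equal-sum tuple $(\hat x_{i,j})$ encrypted under keys consistent with the coalition's known $s_0$ and $(s_i)_{i>u}$, and you reduce indistinguishability to (non-adaptive) Aggregator Obliviousness. The only difference is one of detail, in your favour: the paper compresses the final step into ``immediately follows from the $\mbox{\upshape\sffamily AO}$ security of $\Sigma$'', whereas you make explicit the per-step hybrid over $t_1,\ldots,t_\lambda$ (needed because one $\mbox{\upshape\sffamily AO}$ game permits a single challenge step), the conditional uniformity of the honest keys given the coalition's view, and the random-oracle programming that keeps the reduction consistent.
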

\begin{proof} Assume the coalition is corrupted. Let $\{x_{i,j}\,:\,i=1,\ldots,n, j=1,\ldots,\lambda\}$ be the input messages of all investors. We construct a simulator $\mathcal{S}$ for the 
\[\text{\upshape\sffamily view}^\Sigma((x_{i,j})_{i=1,\ldots,n,j=1,\ldots,\lambda},\kappa)=((x_{i,j})_{i=u+1,\ldots,n,j=1,\ldots,\lambda},\perp,(c_{i,j})_{i=1,\ldots,u,j=1,\ldots,\lambda})\]
of the coaltion, where $c_{i,j}=\text{\upshape\sffamily PSAEnc}_{s_i}(t_j,x_{i,j})$ for all $i,j$ and\\ $\text{\upshape\sffamily PSAEnc}_{s_0}(t_j,0)\cdot\prod_i c_{i,j}=1+p\cdot f_\Sigma((x_{i,j})_{i=1,\ldots,n})$ mod $p^2$ for all $j$, as follows.\footnote{There is no internal randomness of the coalition. Instead the common public randomness from the random oracle $H$ is used.}
\begin{description}
\item $\mathcal{S}$ takes as input $1^\kappa$, 
messages $(x_{i,j})_{i=u+1,\ldots,n, j=1,\ldots,\lambda}$ and the protocol outputs $(X_j)_{j=1,\ldots,\lambda}=(f_\Sigma((x_{i,j})_{i=1,\ldots,n}))_{j=1,\ldots,\lambda}$.
\begin{enumerate}
\item Choose messages $(x^\prime_{i,j})_{i=1,\ldots,u, j=1,\ldots,\lambda}$ with each $x^\prime_{i,j}\in[0,2^l-1]$, such that for all $j=1,\ldots,\lambda$: \[\sum_{i=1}^u x_{i,j}^\prime = X_j-\sum_{i=u+1}^n x_{i,j}.\]
\item Compute $c^\prime_{i,j}=\text{\upshape\sffamily PSAEnc}_{s_i}(t_j,x_{i,j}^\prime)$ for $i=1,\ldots,u$ and\linebreak $c_{i,j}=\text{\upshape\sffamily PSAEnc}_{s_i}(t_j,x_{i,j})$ for $i=u+1,\ldots,n$, $j=1,\ldots,\lambda$ for the keys\linebreak $s_1,\ldots,s_n\leftarrow_R\mathbb{Z}_{pq}$ with $s_0\equiv -\sum_{i=1}^n s_i\Mod pq$.
\end{enumerate}
\item $\mathcal{S}$ outputs $(x_{i,j})_{i=u+1,\ldots,n, j=1,\ldots,\lambda},\perp,(c^\prime_{i,j})_{i=1,\ldots,u, j=1,\ldots,\lambda}$.
\end{description}
The computational indistinguishability of the output distribution of $\mathcal{S}$ from\linebreak $\left\{\text{\upshape\sffamily view}^\Sigma((x_{i,j})_{i=1,\ldots,n,j=1,\ldots,\lambda},\kappa)\right\}_{(x_{i,j})_{i=1,\ldots,n,j=1,\ldots,\lambda},\kappa}$ immediately follows from the $\text{\upshape\sffamily AO}1$ security of $\Sigma$, which holds by the DDH assumption.
\hfill$\qed$\end{proof}

\begin{Lem}\label{Tesseclem} The protocol $\Upsilon$ in Figure $\ref{extschnorr}$ securely computes the accepting functionality\footnote{I.e. the case $b=1$. In the case $b=0$, by the completeness property, the prover has tried to convince the verifier about a wrong statement. Thus, we can assume that the prover was intending some malicious behaviour. In this case, we do not care about preserving the privacy of this prover.} $f_\Upsilon$.
\end{Lem}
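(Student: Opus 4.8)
The plan is to reduce the secure-computation claim directly to the special honest-verifier zero-knowledge (SHVZK) property of $\Upsilon$, which the excerpt already records as holding by the arguments of \cite{80}. First I would fix the roles dictated by our model: in any execution of $\Upsilon$ the prover is an honest investor and hence belongs to the group, while the verifier is the system administrator and hence belongs to the honest-but-curious coalition. Accordingly, the group input $x$ is the witness $r$ (a discrete logarithm of $R$ or of $S=R\cdot g^{-1}$ with base $h$), the coalition input $y$ is the public statement $R$ (from which $S$ is derived), and $f_\Upsilon(x,y)=1$ is exactly the accepting case we are asked to treat. Since the coalition is honest-but-curious, the verifier samples its challenge $v\leftarrow_R\mathbb{Z}_q^*$ honestly and runs the \textbf{\mbox{\upshape\sffamily Chk}} step unchanged, so its view is precisely a complete, \emph{accepting} conversation
\[\text{\upshape\sffamily view}^\Upsilon(x,y,\kappa)=\bigl(R,\;v,\;(a_1,a_2,v_1,w_1,v_2,w_2)\bigr).\]

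Next I would build the simulator $\mathcal{S}$ from the SHVZK simulator $\mathcal{S}_\Upsilon$. On input $(1^\kappa,y,1)$ with $y=R$ a true statement, $\mathcal{S}$ runs $\mathcal{S}_\Upsilon(R)$ to obtain an accepting transcript $(a_1,a_2,v,v_1,w_1,v_2,w_2)$ and re-formats it as $\bigl(R,v,(a_1,a_2,v_1,w_1,v_2,w_2)\bigr)$. The crucial point is that $\mathcal{S}_\Upsilon$ produces this transcript \emph{without} the witness $r$, so $\mathcal{S}$ is a legitimate ppt algorithm depending only on $y$ and on $f_\Upsilon(x,y)=1$, as Definition \ref{hbcsecdef} demands. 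It then remains to match distributions. For a true statement the honest prover chooses $z_1,v_2,w_2\leftarrow_R\mathbb{Z}_q^*$ and sets $v_1=v-v_2$, $w_1=z_1+v_1 r$, whence $a_1=h^{z_1}=h^{w_1}R^{-v_1}$ and $a_2=h^{w_2}S^{-v_2}$; since $z_1$ is uniform and independent of $v_1$, the tuple $(v,v_1,w_1,w_2)$ is uniform and the group elements $a_1,a_2$ are the corresponding back-computed values. The SHVZK simulator generates exactly this distribution (uniform $v,v_2,w_1,w_2$ with $a_1=h^{w_1}R^{-v_1}$, $a_2=h^{w_2}S^{-v_2}$), so the real and simulated views coincide in distribution, giving in particular $\{\mathcal{S}(1^\kappa,y,f_\Upsilon(x,y))\}\approx_c\{\text{\upshape\sffamily view}^\Upsilon(x,y,\kappa)\}$.

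The main obstacle is conceptual rather than computational: one must correctly align the abstract framework of Definition \ref{hbcsecdef} with the concrete prover/verifier roles, justify identifying the coalition's view with a genuine honest-verifier transcript (legitimate precisely because the coalition is honest-but-curious, so the verifier deviates neither in its challenge nor in its checks), and restrict attention to the accepting functionality. The last restriction is exactly the content of the footnote: when $f_\Upsilon=0$ the prover is trying to convince the verifier of a false statement and is therefore behaving maliciously, so its privacy need not be protected and $\mathcal{S}$ need only handle the $b=1$ branch; no soundness or extraction argument enters the present lemma. Once these identifications are made, the statement is an immediate corollary of SHVZK, the only quantitative ingredient being the routine distribution check sketched above.
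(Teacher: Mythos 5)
Your proposal is correct and matches the paper's own argument: the paper likewise restricts to the accepting case $b=1$, treats the honest-but-curious verifier as the corrupted party, and builds the simulator by sampling $v_1',v_2',w_1',w_2'$ uniformly and back-computing $a_1'=h^{w_1'}R^{-v_1'}$, $a_2'=h^{w_2'}S^{-v_2'}$ --- i.e.\ it inlines exactly the SHVZK simulator you invoke, concluding indistinguishability (in fact perfect simulation) from the special honest-verifier zero-knowledge property. Your explicit distribution check (uniformity of four free coordinates with the commitments determined from them) is precisely the content the paper delegates to the phrase ``which holds perfectly,'' so the two proofs coincide in substance.
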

\begin{proof} Assume the honest verifier is corrupted and the prover knows the discrete logarithm of either $R$ or $S=R\cdot g^{-1}$ base $h$, where $g,h,$ are public. We construct a simulator $\mathcal{S}$ for the
\[\text{\upshape\sffamily view}^\Upsilon(t,R,S,\kappa)=(R,S,v,a_1,a_2,(v_1,w_1),(v_2,w_2))\]
of the verifier, such that $v=v_1+v_2$ and $h^{w_1}=a_1 R^{v_1}, h^{w_2}=a_2 S^{v_2}$ and $t$ is the input of the prover, as follows.
\begin{description}
\item $\mathcal{S}$ takes as input $1^\kappa$, 
the verifier's input $(R,S)$ and the protocol output $b=1$.
\begin{enumerate}
\item Choose random values $v_1^\prime,v_2^\prime,w_1^\prime,w_2^\prime\leftarrow_R\mathbb{Z}_q$.
\item Set $v^\prime=v_1^\prime+v_2^\prime$.
\item Set $a_1^\prime=h^{w_1^\prime}R^{-v_1^\prime}$ and $a_2^\prime=h^{w_2^\prime}S^{-v_2^\prime}$.
\end{enumerate}
\item $\mathcal{S}$ outputs $R,S,v^\prime,a_1^\prime,a_2^\prime,(v_1^\prime,w_1^\prime),(v_2^\prime,w_2^\prime)$.
\end{description}
The value $v^\prime$ is random, since $v_1^\prime,v_2^\prime$ are random. Then the distribution of the output of $\mathcal{S}$ is indistinguishable from $\left\{\text{\upshape\sffamily view}^\Upsilon(t,R,S,\kappa)\right\}_{t,R,S,\kappa}$ by the special honest-verifier zero-knowledge property of $\Upsilon$, which holds perfectly.
\hfill$\qed$\end{proof}

\begin{Lem}\label{betacomp} Under the DDH assumption, the algorithms $\text{\upshape\sffamily DIEEnc}$, $\text{\upshape\sffamily DIECom}$ and\linebreak $\text{\upshape\sffamily DIEUnvPay}$ securely compute the functionality $f_{\boldsymbol\beta}$ in the random oracle model.
\end{Lem}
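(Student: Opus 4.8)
The plan is to follow the template of Lemma~\ref{PSAseclem}: build a simulator $\mathcal{S}$ that, from the coalition's input $y$ and the output of $f_{\boldsymbol\beta}$, reconstructs the coalition's view up to computational indistinguishability, and then reduce indistinguishability to the $\mbox{\upshape\sffamily AO}$ security of the PSA scheme of Figure~\ref{DDHEXM1} and the hiding of the Pedersen commitment of Figure~\ref{pedersenexm}. Throughout I assume the uncorrupted investors are $i=1,\ldots,u$ and that the coalition holds $s_0$, the parameters $\boldsymbol\beta=(\beta_0,\ldots,\beta_\lambda)$ and the keys $(s_i,\tilde s_i)$ for $i>u$. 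The first step is to record how $\mbox{\upshape\sffamily DIEUnvPay}$ acts deterministically: expanding $\prod_{j=0}^\lambda c_{i,j}^{\beta_j}$ and $\prod_{j=1}^\lambda \tilde c_{i,j}^{\beta_j}$ modulo $p^2$ with $(1+p\cdot x)^{\beta}\equiv 1+p\cdot\beta x$, the setup identities $\prod_{j=0}^\lambda H(t_j)^{\beta_j}\equiv\prod_{j=1}^\lambda H(\tilde t_j)^{\beta_j}\equiv 1$ cancel the key masks $H(t_j)^{s_i}$ and $H(\tilde t_j)^{\tilde s_i}$, leaving $A_i=\sum_{j=0}^\lambda\beta_j x_{i,j}$ and $B_i=\sum_{j=1}^\lambda\beta_j r_{i,j}$. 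This confirms that $\mbox{\upshape\sffamily DIEEnc},\mbox{\upshape\sffamily DIECom},\mbox{\upshape\sffamily DIEUnvPay}$ compute $f_{\boldsymbol\beta}$.

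For the simulator I would mirror Lemma~\ref{PSAseclem}. Given the outputs $(A_i,B_i)_{i\le u}$, $\mathcal{S}$ samples dummy plaintexts $x'_{i,j},r'_{i,j}$ for the uncorrupted investors solving the affine system $\sum_{j=0}^\lambda\beta_j x'_{i,j}=A_i$ and $\sum_{j=1}^\lambda\beta_j r'_{i,j}=B_i$ (the real data being one solution, so the system is consistent and a solution is found by linear algebra), resamples $s_1,\ldots,s_n\leftarrow_R\mathbb{Z}_{pq}$ consistent with the given $s_0$ together with fresh $\tilde s_i$, and outputs the fabricated ciphers $c'_{i,j}=\mbox{\upshape\sffamily PSAEnc}_{s_i}(t_j,x'_{i,j})$, $\tilde c'_{i,j}=\mbox{\upshape\sffamily PSAEnc}_{\tilde s_i}(\tilde t_j,r'_{i,j})$ and commitments $com'_{i,j}=\mbox{\upshape\sffamily Com}_{pk}(x'_{i,j},r'_{i,j})$, alongside the coalition's own data. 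The commitments contribute no distinguishing advantage by the hiding of the Pedersen scheme. For the two families of PSA ciphers I would run a hybrid over the tags $t_0,\ldots,t_\lambda$ and $\tilde t_1,\ldots,\tilde t_\lambda$, switching the plaintexts of one tag at a time; each step is indistinguishable by $\mbox{\upshape\sffamily AO}$ under DDH, and the conclusion is then glued to Lemmas~\ref{PSAseclem} and~\ref{Tesseclem} through Theorem~\ref{compthmhbcsec}.

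The main obstacle is to pin down exactly which functions of the ciphers the dummy plaintexts have to reproduce, since the coalition can read off more than $A_i$ and $B_i$ from the raw view. Because it holds $s_0$, it can decrypt the per-project aggregates $\sum_{i\le u}x_{i,j}$ from the $\mbox{\upshape\sffamily DIEEnc}$ ciphers at every tag $t_j$, so the $\mbox{\upshape\sffamily AO}$ challenge at each such tag forces the dummy plaintexts to preserve that per-tag aggregate; the $\mbox{\upshape\sffamily DIECom}$ ciphers carry no per-tag aggregate, as there is no aggregator key $\tilde s_0$ for the tags $\tilde t_j$. Moreover, the special setup $\prod_j H(t_j)^{\beta_j}=\prod_j H(\tilde t_j)^{\beta_j}=1$ is precisely what makes $A_i$ and $B_i$ publicly computable from the ciphers without any secret key, which is leakage a generic $\mbox{\upshape\sffamily AO}$ adversary would not enjoy; the reduction must therefore program the random oracle so that this relation holds in both the real and the simulated world, and argue that once every coalition-computable quantity is matched the residual randomness is hidden exactly by $\mbox{\upshape\sffamily AO}$. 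Finally, the per-project aggregates that the $\mbox{\upshape\sffamily DIEEnc}$ ciphers force are not part of $f_{\boldsymbol\beta}$ itself but are supplied, in the composed protocol, by $f_\Sigma$ of Lemma~\ref{PSAseclem}; making this dependency explicit, so that the union of the two functionalities' outputs accounts for all leakage while the simulator chooses the dummy plaintexts consistently with both, is where the real care is needed.
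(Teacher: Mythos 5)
Your proposal is correct and follows essentially the same route as the paper: the paper's own proof of Lemma~\ref{betacomp} is a two-line reduction observing that the ciphers $c_{i,j},\tilde{c}_{i,j}$ are generated by the PSA scheme $\Sigma$ of Figure~\ref{DDHEXM1}, so the claim follows from its $\mbox{\upshape\sffamily AO}$ security exactly as in Lemma~\ref{PSAseclem}. The extra care you call for --- dummy plaintexts simultaneously satisfying $\sum_{j}\beta_j x^\prime_{i,j}=A_i$, $\sum_j \beta_j r^\prime_{i,j}=B_i$ and the per-tag aggregates supplied by $f_\Sigma$, with the commitments covered by the perfect hiding of the Pedersen scheme --- is precisely what the paper's composed simulator in Lemma~\ref{h-b-c} implements, so the cross-functionality dependency you flag is resolved at the composition step rather than inside this lemma (while the random-oracle programming of the relation $\prod_j H(t_j)^{\beta_j}=\prod_j H(\tilde{t}_j)^{\beta_j}=1$ is indeed left implicit by the paper).
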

\begin{proof} In order to compute $A_i=\sum_{j=0}^\lambda \beta_j x_{i,j}$ and $B_i=\sum_{j=1}^\lambda \beta_j r_{i,j}$, the algorithm $\text{\upshape\sffamily DIEUnvPay}$ (executed by the system administrator) takes as input the values $\beta_0,\ldots,\beta_\lambda$ and the encryptions $c_{i,j}, \tilde{c}_{i,j}$ of $x_{i,j}, r_{i,j}$ respectively generated by the algorithms $\text{\upshape\sffamily DIEEnc}$ and $\text{\upshape\sffamily DIECom}$ (executed by investor $N_i$) for $j=(0),1,\ldots,\lambda$. Since the encryptions $c_{i,j}, \tilde{c}_{i,j}$ are both generated by the protocol $\Sigma$ in Figure~\ref{DDHEXM1}, the statement follows from the $\text{\upshape\sffamily AO}1$ security of $\Sigma$ as in the proof of Lemma~\ref{PSAseclem}.
\hfill$\qed$\end{proof}

\begin{Lem}\label{h-b-c} Under the DDH assumption, Investcoin satisfies Property~$1$ of Definition $\ref{ppisecurity}$ in the random oracle model.
\end{Lem}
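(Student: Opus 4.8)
The plan is to derive Property~1 from the three building-block lemmas by invoking the composition theorem (Theorem~\ref{compthmhbcsec}). First I would regard Investcoin as a single ``wrapper'' protocol $\pi$ that computes $f_\Omega$ by making black-box calls to trusted parties for the sub-functionalities $f_\Sigma$, $f_\Gamma$, $f_\Upsilon$ and $f_{\boldsymbol\beta}$: the $\text{\upshape\sffamily DIEEnc}$/$\text{\upshape\sffamily DIEDec}$ layer is a call to $f_\Sigma$, the range-proof layer $\text{\upshape\sffamily DIETes}$ is a call to $f_\Upsilon$, the unveilings in $\text{\upshape\sffamily DIEUnvPay}$/$\text{\upshape\sffamily DIEUnvRet}$ are calls to $f_\Gamma$, and the $\beta$-aggregation is a call to $f_{\boldsymbol\beta}$ (whose own protocol bundles $\text{\upshape\sffamily DIEEnc}$, $\text{\upshape\sffamily DIECom}$, $\text{\upshape\sffamily DIEUnvPay}$ and internally uses the second, $\tilde s_i$-keyed, instance of $\Sigma$). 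I would then spell out $f_\Omega$ precisely as the functionality whose output to the coalition consists of the project sums $(X_j)_j$, the acceptance bits $b_{T,i,j},b_{P,i},b_{R,i}$, the aggregates $A_i,B_i$ returned by $f_{\boldsymbol\beta}$, and the plaintext values $C_i,D_i,E_i,F_i$ that the honest investors announce in the clear during $\text{\upshape\sffamily DIEUnvPay}$ and $\text{\upshape\sffamily DIEUnvRet}$.

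The core of the argument is to verify that the wrapper $\pi$ itself securely computes $f_\Omega$ in the hybrid model where the four sub-functionalities are ideal. In that model the only messages the coalition receives from the honest investors are the outputs returned by the ideal calls together with the announced aggregates $C_i,D_i,E_i,F_i$, and by construction every one of these is already a component of $f_\Omega(x,y)$. Hence the simulator $\mathcal{S}_\pi$, on input $y$ and $f_\Omega(x,y)$, can reproduce the hybrid view by simply relaying those values; the only objects it must synthesise by itself are the commitments $com_{i,j}$. Here I would invoke the perfect hiding of the Pedersen commitment $\Gamma$: for honest $i$ the values $com_{i,j}=h_1^{x_{i,j}}h_2^{r_{i,j}}$ are, over the random $r_{i,j}$, uniform in $\mathcal{QR}_{p^2}$ and hence perfectly simulatable, while $D_i=\sum_j r_{i,j}$ and $F_i=\sum_j\alpha_j r_{i,j}$ fix only two linear constraints on the $r_{i,j}$, leaving enough freedom to open the aggregated commitments consistently. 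This yields that $\pi$ securely computes $f_\Omega$ with perfect (and therefore computational) indistinguishability.

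With this in place the pieces assemble directly: Lemma~\ref{PSAseclem} gives that $\Sigma$ securely computes $f_\Sigma$ under DDH, Lemma~\ref{Tesseclem} that $\Upsilon$ securely computes $f_\Upsilon$, Lemma~\ref{betacomp} that $\text{\upshape\sffamily DIEEnc},\text{\upshape\sffamily DIECom},\text{\upshape\sffamily DIEUnvPay}$ securely compute $f_{\boldsymbol\beta}$, and the perfect hiding of $\Gamma$ handles the commitment part; crucially, all of these hold \emph{non-adaptively}, exactly as Theorem~\ref{compthmhbcsec} requires. Applying that theorem with $\pi$ as above and with $\rho_1,\dots,\rho_m$ the instantiating sub-protocols then shows that $\pi^{\rho_1,\dots,\rho_m}$, which is precisely Investcoin, securely computes $f_\Omega$ in the random oracle model, establishing Property~1.

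I expect the genuine obstacle to lie in the faithfulness of this decomposition rather than in any single sub-lemma. One must check that the real execution really behaves like independent ideal calls even though the sub-protocols share structure: the \emph{same} ciphertexts $c_{i,j}$ feed both $f_\Sigma$ and $f_{\boldsymbol\beta}$, the same randomness $r_{i,j}$ sits inside both the commitment and its $\tilde s_i$-encryption $\tilde c_{i,j}$, and a common random oracle $H$ subject to $\prod_{j=0}^\lambda H(t_j)^{\beta_j}=1$ links all of them. The key points to nail down are that the two PSA instances use independent keys $s_i\neq\tilde s_i$, that the correlation induced by the relation on $H$ affects only the $\beta$-weighted aggregates that $f_{\boldsymbol\beta}$ is \emph{designed} to output, and that simultaneously revealing the two aggregate openings $(C_i,D_i)$ and $(E_i,F_i)$ of the same underlying data remains simulatable from $f_\Omega(x,y)$ alone; once these consistency checks pass, the simulator built from the sub-simulators goes through.
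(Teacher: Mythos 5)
Your proposal follows essentially the same route as the paper's proof: the paper likewise replaces the sub-protocols by trusted-party calls to the functionalities $f_\Sigma$, $f_\Upsilon$ and $f_{\boldsymbol\beta}$, builds a simulator for the coalition's hybrid view that fabricates the honest investors' commitments and openings subject to the constraints imposed by the outputs $X_j$, $A_i$, $B_i$ (relying on the perfect hiding of the Pedersen scheme), and then concludes via Lemma~\ref{PSAseclem}, Lemma~\ref{Tesseclem}, Lemma~\ref{betacomp} and Theorem~\ref{compthmhbcsec}. The only cosmetic deviations are that the paper treats the $\mbox{\upshape\sffamily Unv}$ verifications as local computations of the (corrupted) administrator rather than an ideal call to $f_\Gamma$, and its simulator recomputes $C'_i,D'_i,E'_i,F'_i$ from the fabricated values $x'_{i,j},r'_{i,j}$ instead of relaying them from the output of $f_\Omega$ as you do.
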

\begin{proof} Let $\Omega^{f_\Sigma,f_\Upsilon,f_{\boldsymbol\beta}}$ be the protocol $\Omega$, where executions of the aforementioned protocols are replaced by calls to a trusted party computing the according functionalities $f_\Sigma,f_\Upsilon,f_{\boldsymbol\beta}$.\footnote{The system administrator, which is part of the coalition, and the investors simply feed the trusted party with their according inputs, outputs and received messages to compute the respective functionalities. Since we deal with modular sequential composition, the trusted party cannot be invoked to compute the functionalities simultaneously. Rather a functionality is executed after the output of the previous functionality was distributed to the parties.} Assume the coalition is corrupted. We first describe $\Omega^{f_\Sigma,f_\Upsilon,f_{\boldsymbol\beta}}$. Thereby we assume that the key generation in $\text{\upshape\sffamily DIESet}$ is completed as a pre-computation. The other algorithms work as follows.
\begin{itemize}
\item For all $j=0,\ldots,\lambda$, the input $(x_{i,j})_{i=1,\ldots,u}$ of the uncorrupted group of investors and the input $(x_{i,j})_{i=u+1,\ldots,n}$ of the coalition is sent to the trusted party which returns $X_j=\sum_{i=1}^n x_{i,j}=f_{\Sigma}(x_{1,j},\ldots,x_{n,j})$ to the system administrator. The system administrator verifies that $X_0=0$, otherwise it aborts.
\item For all $i=1,\ldots,u$, $j=1,\ldots,\lambda$, investor $N_i$ chooses $r_{i,j}\leftarrow_R [0,m]$ and sends $com_{i,j}=\mbox{\upshape\sffamily Com}_{pk}(x_{i,j},r_{i,j})$ to the system administrator (where \mbox{\upshape\sffamily Com} is as in Figure \ref{pedersenexm}).
\item For all $i=1,\ldots,u$, $j=1,\ldots,\lambda$, $k=0,\ldots,l-1$, investor $N_i$ chooses\linebreak $x^{(k)}_{i,j}\in\{0,1\}, r^{(k)}_{i,j}\leftarrow_R [0,m]$ with $x_{i,j}\equiv\sum_{k=0}^{l-1} x^{(k)}_{i,j}\cdot 2^k\Mod pq$ and\linebreak $r_{i,j}\equiv\sum_{k=0}^{l-1} r^{(k)}_{i,j}\cdot 2^k\Mod pq$ and sends $com^{(k)}_{i,j}=\mbox{\upshape\sffamily Com}_{pk}(x^{(k)}_{i,j},r^{(k)}_{i,j})$ to the system administrator. The system administrator verifies that\linebreak $com_{i,j}\equiv\prod_{k=0}^{l-1} (com^{(k)}_{i,j})^{2^k}\Mod p^2$ for all $i,j$ and sends the $com^{(k)}_{i,j}$ to the trusted party. Moreover, the investors send the $x^{(k)}_{i,j}, r^{(k)}_{i,j}$ to the trusted party which returns \[f_\Upsilon(r^{(k)}_{i,j}, com^{(k)}_{i,j})=b_{T,i,j,k}\] for all $i=1,\ldots,u$, $j=1,\ldots,\lambda$, $k=0,\ldots,l-1$ to the system administrator.
\item For $i=1,\ldots,u$, investor $N_i$ sends $x_{i,j}, r_{i,j}$, $j=1,\ldots,\lambda$ to the trusted party and the system administrator sends $\boldsymbol\beta=(\beta_0,\ldots,\beta_\lambda)$ to the trusted party which returns 
\[(A_i,B_i)=\left(\sum_{j=1}^\lambda\beta_j x_{i,j},\sum_{j=1}^\lambda \beta_j r_{i,j}\right)=f_{\boldsymbol\beta}(0,x_{i,1},r_{i,1},\ldots,x_{i,\lambda},r_{i,\lambda})\] 
for all $i=1,\ldots,u$ to the system administrator which verifies that 
\[\mbox{\upshape\sffamily Unv}_{pk}\left(\prod_{j=1}^\lambda com_{i,j}^{\beta_j},A_i,B_i\right)=1.\]
Then the uncorrupted investors $N_i$, $i=1,\ldots,u$, send $C_i=\sum_{j=0}^\lambda x_{i,j}$ and $D_i=\sum_{j=1}^\lambda r_{i,j}$ to the system administrator which computes
\[b_{P,i}=\mbox{\upshape\sffamily Unv}_{pk}\left(\prod_{j=1}^\lambda com_{i,j},C_i,D_i\right), b_{P,i}\in\{0,1\}.\]
\item The uncorrupted investors $N_i$, $i=1,\ldots,u$, send $E_i=\sum_{j=0}^\lambda \alpha_j x_{i,j}$ and $F_i=\sum_{j=1}^\lambda \alpha_j r_{i,j}$ to the system administrator which computes
\[b_{R,i}=\mbox{\upshape\sffamily Unv}_{pk}\left(\prod_{j=1}^\lambda com^{\alpha_j}_{i,j},E_i,F_i\right), b_{R,i}\in\{0,1\}.\]
\end{itemize}

This is exactly the protocol in Figure \ref{invcnprot} where sub-protocols are exchanged by calls to a trusted party computing the ideal functionalities $f_\Sigma, f_\Upsilon, f_{\boldsymbol\beta}$ and the system administrator knows all secret values of the corrupted investors. We show that $\Omega^{f_\Sigma,f_\Upsilon,f_{\boldsymbol\beta}}$ performs a secure computation. In the following, we construct a simulator $\mathcal{S}$ for the $\text{\upshape\sffamily view}^{\Omega^{f_\Sigma,f_\Upsilon,f_{\boldsymbol\beta}}}((x_{i,j})_{i=1,\ldots,n,j=0,\ldots,\lambda},\kappa)=(y,r,m)$ of the coaltion with
\begin{align*}
 y= & ((x_{i,j})_{i=u+1,\ldots,n,j=0,\ldots,\lambda},(x^{(k)}_{i,j})_{i=u+1,\ldots,n,j=1,\ldots,\lambda,k=0,\ldots,l-1}),\\
 r= & ((r_{i,j})_{i=u+1,\ldots,n,j=1,\ldots,\lambda},(r^{(k)}_{i,j})_{i=u+1,\ldots,n,j=1,\ldots,\lambda,k=0,\ldots,l-1}),\\
 m= & ((com_{i,j})_{i=1,\ldots,u,j=1,\ldots,\lambda},(com_{i,j}^{(k)})_{i=1,\ldots,u,j=1,\ldots,\lambda,k=0,\ldots,l-1},(C_i,D_i,E_i,F_i)_{i=1,\ldots,u}).
\end{align*}

\noindent $\mathcal{S}$ takes as input $1^\kappa$, 
 $y$ and the protocol outputs $X_j$ for $j=0,\ldots,\lambda$,\linebreak $\{b_{T,i,j,k}=1\,:\,i=1,\ldots,u,j=1,\ldots,\lambda,k=0,\ldots,l-1\}$, $A_i,B_i$ for $i=1,\ldots,u$.
\begin{enumerate}
\item For $i=1,\ldots,u$, $j=1,\ldots,\lambda$, set $x^\prime_{i,0}=0$ and choose $x^\prime_{i,j}\in[0,2^l-1]$, such that $\sum_{i=1}^u x_{i,j}^\prime = X_j-\sum_{i=u+1}^n x_{i,j}$ for all $j=1,\ldots,\lambda$, $\sum_{j=1}^\lambda \beta_j x_{i,j}^\prime=A_i$ for all $i=1,\ldots,u$. Choose $(x^{\prime^{(0)}}_{i,j},\ldots,x^{\prime^{(l-1)}}_{i,j})$ with $x^\prime_{i,j}\equiv\sum_{k=0}^{l-1}x^{\prime^{(k)}}_{i,j}\cdot 2^k\Mod pq$ for all $i=1,\ldots,u, j=1,\ldots,\lambda$.
\item For $i=1,\ldots,n, j=1,\ldots,\lambda$, choose $r^\prime_{i,j}\leftarrow_R[0,2^l-1]$ and $(r^{\prime^{(0)}}_{i,j},\ldots,r^{\prime^{(l-1)}}_{i,j})$ with $r^\prime_{i,j}\equiv\sum_{k=0}^{l-1}r^{\prime^{(k)}}_{i,j}\cdot 2^k\Mod pq$ and $\sum_{j=1}^\lambda \beta_j r_{i,j}^\prime=B_i$ for all $i=1,\ldots,n$, $j=1,\ldots,\lambda$.
\item Compute $com^\prime_{i,j}=\text{\upshape\sffamily Com}_{pk}(x_{i,j}^\prime,r^\prime_{i,j})$ and $com^{\prime^{(k)}}_{i,j}=\text{\upshape\sffamily Com}_{pk}(x_{i,j}^{\prime^{(k)}},r^{\prime^{(k)}}_{i,j})$ for all $i=1,\ldots,u$, $j=1,\ldots,\lambda$, $k=0,\ldots,l-1$.
\item For $i=1,\ldots,u$, compute $C_i^\prime=\sum_{j=1}^\lambda x_{i,j}^\prime$, $D_i^\prime=\sum_{j=1}^\lambda r_{i,j}^\prime$, $E_i^\prime=\sum_{j=1}^\lambda \alpha_j x_{i,j}^\prime$, $F_i^\prime=\sum_{j=1}^\lambda \alpha_j r_{i,j}^\prime$.
\end{enumerate}
$\mathcal{S}$ outputs $(y^\prime,r^\prime,m^\prime)$ with
\begin{align*}
 y^\prime= & ((x^\prime_{i,j})_{i=u+1,\ldots,n,j=0,\ldots,\lambda},(x^{\prime^{(k)}}_{i,j})_{i=u+1,\ldots,n,j=1,\ldots,\lambda,k=0,\ldots,l-1}),\\
 r^\prime= & ((r^\prime_{i,j})_{i=u+1,\ldots,n,j=1,\ldots,\lambda},(r^{\prime^{(k)}}_{i,j})_{i=u+1,\ldots,n,j=1,\ldots,\lambda,k=0,\ldots,l-1}),\\
 m^\prime= & ((com^\prime_{i,j})_{i=1,\ldots,u,j=1,\ldots,\lambda},(com_{i,j}^{\prime^{(k)}})_{i=1,\ldots,u,j=1,\ldots,\lambda,k=0,\ldots,l-1},(C^\prime_i,D^\prime_i,E^\prime_i,F^\prime_i)_{i=1,\ldots,u})
\end{align*}
and the output of $\mathcal{S}$ is indistinguishable from the view of the coalition by the perfect hiding property of $\Gamma$. The statement of the lemma follows from Lemma~$\ref{PSAseclem}$, Lemma $\ref{Tesseclem}$, Lemma $\ref{betacomp}$,  and Theorem $\ref{compthmhbcsec}$.
\hfill$\qed$\end{proof}

\subsubsection{Security against malicious investors.}

\begin{Lem}\label{linkage} Under the dlog assumption, Investcoin satisfies Property~$2$ of Definition $\ref{ppisecurity}$ with overwhelming probability in the random oracle model.
\end{Lem}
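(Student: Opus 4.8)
The plan is to reduce any violation of linkage either to breaking the binding property of the Pedersen commitment $\Gamma$ (which holds under the dlog assumption in $\mathcal{QR}_{p^2}$) or to a cheating investor guessing a nonzero integer vector orthogonal to the secret parameters $\boldsymbol\beta=(\beta_0,\ldots,\beta_\lambda)$, an event I will bound probabilistically. Fix an investor $N_i$ and let $\hat x_{i,j},\hat r_{i,j}$ denote the values to which $com_{i,j}=h_1^{\hat x_{i,j}}h_2^{\hat r_{i,j}}$ is actually opened. These are extractable: since the Range test $\mbox{\upshape\sffamily DIETes}$ of Figure~\ref{rangetest} is a proof of knowledge with special soundness, a reduction can run its extractor on the bit-commitments $com_{i,j}^{(k)}$ to recover $x_{i,j}^{(k)},r_{i,j}^{(k)}$ and hence $\hat x_{i,j}=\sum_k x_{i,j}^{(k)}2^k$ and $\hat r_{i,j}=\sum_k r_{i,j}^{(k)}2^k$, with each $\hat x_{i,j}\in[0,m]$. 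Recall that acceptance of $\mbox{\upshape\sffamily DIEUnvPay}$ requires both the $\beta$-check $\mbox{\upshape\sffamily Unv}_{pk}(\prod_{j=1}^\lambda com_{i,j}^{\beta_j},A_i,B_i)=1$ and the final check $\mbox{\upshape\sffamily Unv}_{pk}(\prod_{j=1}^\lambda com_{i,j},C_i,D_i)=1$.

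First I would pin down $A_i$. Writing $c_{i,j}=(1+p\,x_{i,j})H(t_j)^{s_i}$ and using the cancellation $\prod_{j=0}^\lambda H(t_j)^{\beta_j}=1$ from Equation~\eqref{newverificationparameters}, the key-dependent factors vanish and $\prod_{j=0}^\lambda c_{i,j}^{\beta_j}\equiv 1+p\sum_{j=0}^\lambda\beta_j x_{i,j}\pmod{p^2}$, so the system recovers $A_i=\sum_{j=0}^\lambda\beta_j x_{i,j}$ exactly (the bound $q'<q/(m\lambda)$ keeps $|A_i|<q$, so $(V-1)/p$ is the true integer). By the homomorphy of $\Gamma$, the extracted openings also open $\prod_{j=1}^\lambda com_{i,j}$ to $(\sum_j\hat x_{i,j},\sum_j\hat r_{i,j})$ and $\prod_{j=1}^\lambda com_{i,j}^{\beta_j}$ to $(\sum_j\beta_j\hat x_{i,j},\sum_j\beta_j\hat r_{i,j})$. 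Should either pair differ from the accepted $(C_i,D_i)$ or $(A_i,B_i)$, the reduction holds two distinct openings of one commitment and computes $\text{dlog}_{h_1}(h_2)$, contradicting dlog. Hence, with overwhelming probability, $C_i=\sum_{j=1}^\lambda\hat x_{i,j}$ (modulo the order $pq$, hence as integers in the valid range) and $\sum_{j=1}^\lambda\beta_j\hat x_{i,j}=A_i=\sum_{j=0}^\lambda\beta_j x_{i,j}$.

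The second identity rearranges to $\langle\boldsymbol\beta,\delta\rangle=0$, where $\delta$ is the integer vector with entry $-x_{i,0}$ at index $0$ and $\hat x_{i,j}-x_{i,j}$ at index $j\geq 1$; it is nonzero unless $x_{i,0}=0$ and $\hat x_{i,j}=x_{i,j}$ for all $j\geq 1$. Here I would exploit that $N_i$ fixes her commitments and ciphertexts without knowledge of $\boldsymbol\beta$: the only trace of $\boldsymbol\beta$ in her view sits in the published group elements $H(t_{\lambda-1}),H(t_\lambda),H(\tilde t_{\lambda-1}),H(\tilde t_\lambda)$, and recovering any nontrivial linear relation among the $\beta_j$ from them is a representation problem, hard under dlog. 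Thus, up to negligible probability, $\delta$ is independent of $\boldsymbol\beta$ and its entries are bounded by $m<q$. For a fixed nonzero $\delta$, choosing a coordinate $j_0$ with $\delta_{j_0}\neq 0$ and fixing the remaining $\beta_j$, the relation forces $\beta_{j_0}$ to equal one specific integer; since $\beta_{j_0}$ is uniform on $[-q',q']$, this has probability at most $(2q'+1)^{-1}$, negligible for the super-polynomial $q'$ of the scheme. Therefore $\delta=0$, so $\hat x_{i,j}=x_{i,j}$ and $C_i=\sum_{j=1}^\lambda x_{i,j}$. For $E_i$ I would argue identically: acceptance of $\mbox{\upshape\sffamily DIEUnvRet}$ forces $E_i=\sum_j\alpha_j\hat x_{i,j}$ by binding on $\prod_{j=1}^\lambda com_{i,j}^{\alpha_j}$, and $\hat x_{i,j}=x_{i,j}$ then yields $E_i=\sum_j\alpha_j x_{i,j}$.

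The main obstacle I expect lies in cleanly combining three ingredients into a single reduction whose failure probability is negligible: (i) extracting the committed values through the special soundness of the Range test so that the binding argument applies against the \emph{two different} linear combinations $\prod_j com_{i,j}$ and $\prod_j com_{i,j}^{\beta_j}$; (ii) the exact arithmetic giving $A_i=\sum_j\beta_j x_{i,j}$, which hinges on the cancellation of Equation~\eqref{newverificationparameters} and on the magnitude bounds keeping every quantity below $q$; and (iii) the secrecy of $\boldsymbol\beta$, ensuring that a cheating investor cannot engineer a nonzero $\delta$ with $\langle\boldsymbol\beta,\delta\rangle=0$. Point (iii) is the most delicate, since it mixes a statistical argument (randomness of $\beta_{j_0}$) with a computational one (no relation among the $\beta_j$ is extractable from the random-oracle outputs), and both must be folded into the final negligible error bound.
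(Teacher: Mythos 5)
Your proposal follows essentially the same route as the paper's proof: acceptance of the hidden $\beta$-weighted check $\mbox{\upshape\sffamily Unv}_{pk}\left(\prod_{j=1}^\lambda com_{i,j}^{\beta_j},A_i,B_i\right)=1$, the cancellation from Equation \eqref{newverificationparameters} pinning $A_i=\sum_{j=0}^\lambda\beta_j x_{i,j}$, Pedersen binding under the dlog assumption, and the investor's ignorance of $\boldsymbol\beta$ (guaranteed by the random oracle) forcing the committed values to coincide with the encrypted ones. Your formalization is in fact somewhat more careful than the paper's, which does not invoke the Range-test extractor and simply asserts that $f_{\beta_0,\ldots,\beta_\lambda}$ is not efficiently computable by $N_i$ where you give the explicit $(2q'+1)^{-1}$ bound on $\langle\boldsymbol\beta,\delta\rangle=0$ for a nonzero difference vector $\delta$.
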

\begin{proof}
For any $i\in[n]$, assume that $c_{i,j}\leftarrow \mbox{\upshape\sffamily DIEEnc}_{sk_i}(t_j,x_{i,j})$ for all $j=1,\ldots,\lambda$,
$\mbox{\upshape\sffamily DIEUnvPay}_{pk}\left(\prod_{j=1}^\lambda com_{i,j},C_i,D_i\right)=1$,
$\mbox{\upshape\sffamily DIEUnvRet}_{pk}\left(\prod_{j=1}^\lambda com_{i,j}^{\alpha_j},E_i,F_i\right)=1$. Then by the rules of the protocol also $\mbox{\upshape\sffamily Unv}_{pk}\left(\prod_{j=1}^\lambda com_{i,j}^{\beta_j},A_i,B_i\right)=1$ for some $\beta_0,\ldots,\beta_\lambda\in[-q^\prime,q^\prime]$ (unknown to $N_i$) with $q^\prime<q/(m\lambda)$, where we have defined
\[A_i=\left(\left(\prod_{j=0}^\lambda c_{i,j}^{\beta_j}\Mod p^2\right)-1\right)/p\mbox{ and } B_i=\left(\left(\prod_{j=1}^\lambda \tilde{c}_{i,j}^{\beta_j}\Mod p^2\right)-1\right)/p\] 
with $c_{i,j}\equiv(1+p\cdot x_{i,j})\cdot H(t_j)^{s_i}\Mod p^2$ and $\tilde{c}_{i,j}\equiv(1+p\cdot r_{i,j})\cdot H(\tilde{t}_j)^{\tilde{s}_i}\Mod p^2$ for all $j=1,\ldots,\lambda$ and for a random oracle $H:T\cup\tilde{T}\to\mathcal{QR}_{p^2}$. Now assume that either $C_i\neq\sum_{j=1}^\lambda x_{i,j}$ or $E_i\neq\sum_{j=1}^\lambda\alpha_j x_{i,j}$. Then by the homomorphy and computational binding of the Pedersen commitment scheme $\Gamma$, which holds by the dlog assumption, there is at least one $j^\prime\in[\lambda]$, such that $com_{i,j^\prime}$ is not the commitment of $x_{i,j^\prime}$ but of some $x^\prime_{i,j^\prime}\neq x_{i,j^\prime}$. We define the mapping 
\[f_{\beta_0,\ldots,\beta_\lambda} : [0,m]^\lambda\to [-q,q], f_{\beta_0,\ldots,\beta_\lambda}(x_0,\ldots,x_\lambda)=\sum_{j=0}^\lambda \beta_j x_j.\]
Then, by Equation \eqref{newverificationparameters}, $f_{\beta_0,\ldots,\beta_\lambda}(x_{i,0},\ldots,x_{i,\lambda})=A_i$. Moreover, $f_{\beta_0,\ldots,\beta_\lambda}$ is not efficiently computable by $N_i$, since it does not know $\beta_0,\ldots,\beta_\lambda\in [-q^\prime,q^\prime]$, if $q^\prime$ is super-polynomial in $\kappa$ (from Equation \eqref{newverificationparameters}, $N_i$ can compute the $\beta_j$ only with negligible probability, since $H$ is a random oracle). Therefore with overwhelming probability, for any choice of $(x^\prime_{i,0},\ldots,x^\prime_{i,\lambda})$ by $N_i$, such that $com_{i,j}$ is valid for $x^\prime_{i,j}$ for all $j=1,\ldots,\lambda$, the equation $f_{\beta_0,\ldots,\beta_\lambda}(x^\prime_{i,0},\ldots,x^\prime_{i,\lambda})=A_i$ is only true, if $x^\prime_{i,j}=x_{i,j}$ for all $j=0,\ldots,\lambda$. This is a contradiction to the existence of at least one $j^\prime$ with $x^\prime_{i,j^\prime}\neq x_{i,j^\prime}$. By the binding of $\Gamma$, the case $f_{\beta_0,\ldots,\beta_\lambda}(x^\prime_{i,0},\ldots,x^\prime_{i,\lambda})\neq A_i$ is a contradiction to $\mbox{\upshape\sffamily Unv}_{pk}\left(\prod_{j=1}^\lambda com_{i,j}^{\beta_j},A_i,B_i\right)=1$.
\hfill$\qed$\end{proof}

\begin{Lem}\label{positiveness} Under the dlog assumption, Investcoin satisfies Property~$3$ of Definition $\ref{ppisecurity}$ with overwhelming probability.
\end{Lem}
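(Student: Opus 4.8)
The plan is to prove the two directions of the ``iff'' separately, since Property~$3$ asserts that $\text{\upshape\sffamily DIETes}_{pk_i}(x_{i,j})=1$ holds exactly when $x_{i,j}\geq 0$, and to reduce both directions to the properties of the Range test in Figure~\ref{rangetest}. Recall that $\text{\upshape\sffamily DIETes}$ runs this Range test on $(x_{i,j},r_{i,j})$, and that the test combines a Pedersen commitment to the binary digits of $x_{i,j}$ with the extended Schnorr proof of Figure~\ref{extschnorr}, which for each digit shows that the digit commitment $com^{(k)}_{i,j}$ opens to a value in $\{0,1\}$.

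For the \emph{completeness} direction, I would argue that if $x_{i,j}\in[0,m]=[0,2^l-1]$, then $x_{i,j}\geq 0$ and it admits a binary expansion $x_{i,j}=\sum_{k=0}^{l-1}x^{(k)}_{i,j}\cdot 2^k$ with $x^{(k)}_{i,j}\in\{0,1\}$. The honest prover commits to each digit and runs the extended Schnorr proof with the correct witness, namely the discrete logarithm $r^{(k)}_{i,j}$ of either $com^{(k)}_{i,j}$ or $com^{(k)}_{i,j}\cdot g^{-1}$ base $h$, according to whether the digit is $0$ or $1$. By the completeness of the extended Schnorr proof and the homomorphic identity $\prod_{k=0}^{l-1}(com^{(k)}_{i,j})^{2^k}=com_{i,j}$, every verifier check in Figure~\ref{rangetest} passes, so $\text{\upshape\sffamily DIETes}_{pk_i}(x_{i,j})=1$. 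This direction holds with certainty and needs no computational assumption.

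For the \emph{soundness} direction, I would assume $\text{\upshape\sffamily DIETes}_{pk_i}(x_{i,j})=1$ and deduce $x_{i,j}\geq 0$ with overwhelming probability. By the special soundness of the extended Schnorr proof, an accepting conversation on index $k$ guarantees that $com^{(k)}_{i,j}$ is a commitment to a bit: either $com^{(k)}_{i,j}=h^{r^{(k)}_{i,j}}$ (digit $0$) or $com^{(k)}_{i,j}=g\cdot h^{r^{(k)}_{i,j}}$ (digit $1$). By the computational binding of the Pedersen commitment, which holds under the dlog assumption, no ppt prover can open $com^{(k)}_{i,j}$ to a non-bit value except with negligible probability. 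The verifier's consistency check $com_{i,j}\equiv\prod_{k=0}^{l-1}(com^{(k)}_{i,j})^{2^k}\bmod p^2$ together with binding then forces the value committed in $com_{i,j}$ to equal $\sum_{k=0}^{l-1}x^{(k)}_{i,j}\cdot 2^k\in[0,2^l-1]$. Since amounts are bounded by $m=2^l-1<q<pq$, a negative investment $x_{i,j}<0$ would be represented by the residue $x_{i,j}\bmod pq\geq pq-m>m$, which cannot coincide with any value in $[0,2^l-1]$; hence the committed value is genuinely non-negative, i.e. $x_{i,j}\geq 0$.

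The main obstacle will be making the soundness step fully rigorous once the Range test is made non-interactive via the Fiat--Shamir heuristic: there, special soundness must be invoked through the random oracle by a rewinding (forking) argument to extract the bit witnesses, and one must verify that computational binding and the exclusion of the wrap-around residue $x_{i,j}\bmod pq$ jointly rule out every way a malicious prover could pass with a value outside $[0,2^l-1]$. Combining the at most $l$ digit-level negligible failure probabilities by a union bound over $k=0,\ldots,l-1$ then yields the overwhelming-probability guarantee claimed in the statement.
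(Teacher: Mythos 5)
Your proposal is correct and follows essentially the same route as the paper's proof: the \textquotedblleft$\Leftarrow$\textquotedblright\ direction from the completeness of the extended Schnorr proof of Figure~\ref{extschnorr} together with the homomorphic identity $com_{i,j}\equiv\prod_{k=0}^{l-1}(com^{(k)}_{i,j})^{2^k}$, and the \textquotedblleft$\Rightarrow$\textquotedblright\ direction from special soundness plus the computational binding of the Pedersen commitment under the dlog assumption, the paper merely phrasing the cheating case as the prover needing the discrete logarithm of a bit-commitment beyond the $l$ presented ones, while you phrase it via extracted bits forcing the committed value into $[0,2^l-1]$. If anything, your treatment is slightly more explicit than the paper's on two points the paper only gestures at: the mod-$pq$ wrap-around that rules out negative residues (the paper's parenthetical \textquotedblleft since we reduce modulo $pq$\textquotedblright) and the need for a rewinding/forking argument once Fiat--Shamir is applied.
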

\begin{proof} We have to show that for all $i=1,\ldots,n, j=1,\ldots,\lambda$ and $pk=(h_1,h_2)$, with overwhelming probabilty, $\mbox{\upshape \sffamily DIETes}_{pk}(x_{i,j})=1$ iff $x_{i,j}\geq 0$. The Range test in Figure \ref{rangetest} performs $l$ times the protocol from Figure \ref{extschnorr}. I.e. for every bit $x_{i,j}^{(k)}$ of $x_{i,j}$, the Range test performs an Extended Schnorr protocol (Figure \ref{extschnorr}) for proving the knowledge of $1$ out of $2$ secrets: either the prover knows the opening $(0,r_{i,j}^{(k)})$ to the Pedersen commitment $com_{i,j}^{(k)}=h_2^{r_{i,j}^{(k)}}$ or the opening $(1,r_{i,j}^{(k)})$ to $com_{i,j}^{(k)}=h_1\cdot h_2^{r_{i,j}^{(k)}}$, where $k=0,\ldots,l-1$.\\ 
Since the Pedersen Commitment is homomorphic, the commitment $com_{i,j}$ for $x_{i,j}$ is congruent $\prod_{k=0}^{l-1} (com_{i,j}^{(k)})^{2^k}$. Then the direction \textquotedblleft$\Leftarrow$\textquotedblright$\mbox{\,}$ immediately follows from the completeness property of the protocol in Figure \ref{extschnorr}.\\
Assume a malicious prover tries to prove that some commited value\linebreak $x_{i,j}\notin[0,2^l-1]$ is in $[0,2^l-1]$. Then there exists at least one $k^\prime\geq l$ such that $x_{i,j}=x_{i,j}^{(k^\prime)}\cdot 2^{k^\prime}+\sum_{k=0}^{l-1} x_{i,j}^{(k)}\cdot 2^{k}$ with $x_{i,j}^{(k^\prime)}, x_{i,j}^{(k)}\in\{0,1\}$ for all $k=0,\ldots,l-1$ (since we reduce modulo $pq$). Since the verifier has to receive commitments to exactly $l$ bits, the prover needs to know the discrete logarithm of at least one $h_2^{r_{i,j}^{(k^{\prime\prime})}}$ or $h_1\cdot h_2^{r_{i,j}^{(k^{\prime\prime})}}$ which is not among $com_{i,j}^{(k)}$, $k=0,\ldots,l-1,k^\prime$. The prover can compute it only with negligible probability by the dlog assumption. Thus, the direction \textquotedblleft$\Rightarrow$\textquotedblright$\mbox{\,}$ follows from the special soundness of the protocol from Figure~\ref{extschnorr}.
\hfill$\qed$\end{proof}

\begin{Lem}\label{correctparams} Under the dlog assumption, Investcoin satisfies Property~$4$ of Definition $\ref{ppisecurity}$ with overwhelming probability in the random oracle model.
\end{Lem}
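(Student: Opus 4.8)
The plan is to let the distinguisher $\mathcal{D}_i$ re-run exactly the consistency check that the system administrator performs inside the $\mbox{\upshape\sffamily DIEUnvPay}$ algorithm of Figure~\ref{invcnprot}, and to output $1$ precisely when that check succeeds. Since $\mathcal{D}_i$ is given $sk_0=(s_0,\beta_0,\ldots,\beta_\lambda)$ together with the public key $pk=(\tilde T,h_1,h_2)$, it can form $A_i=((\prod_{j=0}^\lambda c_{i,j}^{\beta_j}\bmod p^2)-1)/p$ and $B_i=((\prod_{j=1}^\lambda \tilde c_{i,j}^{\beta_j}\bmod p^2)-1)/p$ (using, for the first product, the investor's dummy cipher $c_{i,0}$ encrypting $x_{i,0}=0$, which is part of its transcript), and then test whether both products lie in the order-$p$ subgroup of $\mathcal{QR}_{p^2}$, i.e. are $\equiv 1\bmod p$ so that $A_i,B_i$ are well-defined integers in $(-q,q)$, and whether $\prod_{j=1}^\lambda com_{i,j}^{\beta_j}=h_1^{A_i}h_2^{B_i}$. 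It outputs $1$ iff all three conditions hold.

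On an honestly-generated tuple this check passes with probability $1$, which is exactly the reasoning following Figure~\ref{invcnprot}. Writing $c_{i,j}\equiv(1+p x_{i,j})H(t_j)^{s_i}$ and $\tilde c_{i,j}\equiv(1+p r_{i,j})H(\tilde t_j)^{\tilde s_i}$, the binomial identity $(1+pa)^{\beta}\equiv 1+p\beta a\bmod p^2$ together with Equation~\eqref{newverificationparameters} makes the random-oracle factors $(\prod_{j=0}^\lambda H(t_j)^{\beta_j})^{s_i}$ and $(\prod_{j=1}^\lambda H(\tilde t_j)^{\beta_j})^{\tilde s_i}$ collapse to $1$, so that $A_i=\sum_{j=1}^\lambda\beta_j x_{i,j}$ and $B_i=\sum_{j=1}^\lambda\beta_j r_{i,j}$; the homomorphy of the Pedersen commitment then yields the final equation. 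Hence $\mathcal{D}_i$ outputs $1$ on the correct-parameter input.

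The substance of the lemma is the converse: if $(pk_i^{(j)},sk_i^{(j)},t_j')\neq(pk_i,sk_i,t_j)$ for at least one $j$, then $\mathcal{D}_i$ outputs $0$ with overwhelming probability over the random oracle and the secret $\beta_j$. I would argue this by where the deviation sits. If a wrong PSA time-step $t_j'$ or a wrong exponent $s_i^{(j)}$ (respectively $\tilde t_j,\tilde s_i^{(j)}$) is used, the factor left over in $\prod_{j=0}^\lambda c_{i,j}^{\beta_j}$ (respectively $\prod_{j=1}^\lambda \tilde c_{i,j}^{\beta_j}$) is no longer the trivial element forced by Equation~\eqref{newverificationparameters} but a weighted product of independent uniform $\mathcal{QR}_{p^2}$-values with the \emph{secret} weights $\beta_j$; its order-$q$ component is then nonzero except with probability $O(1/q)=\mbox{\upshape\sffamily neg}(\kappa)$, so the product is $\not\equiv 1\bmod p$ and the corresponding $A_i$ or $B_i$ is undefined. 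In the residual case where the leftover factor happens to sit in the order-$p$ subgroup (so it merely shifts $A_i$ by some $\delta$), the shift is caught by the commitment test, because the left side still uses the honestly committed $x_{i,j}$ while the right side carries the extra $h_1^{\delta}$ with $\delta\neq 0$ w.h.p. by the random oracle. Finally, if the commitment part itself is tampered with (different generators $h_1^{(j)},h_2^{(j)}$, or committed values inconsistent with the encrypted ones), a passing check forces $\prod_{j=1}^\lambda com_{i,j}^{\beta_j}=h_1^{A_i}h_2^{B_i}$ for the \emph{correct} $h_1,h_2$; exactly as in the proof of Lemma~\ref{linkage}, the computational binding of the Pedersen scheme under the dlog assumption, combined with the fact that the map $f_{\beta_0,\ldots,\beta_\lambda}(x_0,\ldots,x_\lambda)=\sum_{j}\beta_j x_j$ is not efficiently computable by $N_i$, rules this out except with negligible probability. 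A union bound over the cases and over $j\in[\lambda]$ bounds the total failure probability by $\mbox{\upshape\sffamily neg}(\kappa)$, so the two output probabilities differ by at least $1-\mbox{\upshape\sffamily neg}(\kappa)$.

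The hard part will be the soundness direction, and within it the scenario where the investor picks wrong parameters that nonetheless \emph{accidentally} reproduce the cancellation of Equation~\eqref{newverificationparameters}. The whole leverage is that the $\beta_j$ are chosen secretly from a super-polynomially large range (recall $q'<q/(m\lambda)$) and that $H(t_j),H(\tilde t_j)$ are random-oracle outputs, so the investor must commit to its deviant parameters independently of this hidden linear test; the probability that a $\beta$-independent deviation lands in the order-$p$ subgroup is then $1/q$. Making this quantitative for each deviation type, ruling out the degenerate exponent offsets that are multiples of $q$ (which is where the commitment test, rather than the $\bmod\,p$ test, must take over), and checking that the range bound keeps the honest $A_i,B_i$ inside $(-q,q)$ so that the honest regime ($\equiv 1\bmod p$) and the cheating regime ($\not\equiv 1\bmod p$) are cleanly separated, is the part that requires the most care.
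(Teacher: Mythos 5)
Your distinguisher is exactly the paper's (recompute $A_i,B_i$ from the hidden $\beta_0,\ldots,\beta_\lambda$ and output $\mbox{\upshape\sffamily Unv}_{pk}\left(\prod_{j=1}^\lambda com_{i,j}^{*^{\beta_j}},A_i,B_i\right)$), and your honest-case analysis via the binomial identity and Equation \eqref{newverificationparameters} matches the paper's. But the soundness direction has a genuine gap, and it sits precisely where you did not look: a \emph{consistent} key substitution. Suppose the malicious investor replaces $s_i$ by some $s_i^\prime\neq s_i$ in \emph{every} cipher, including the dummy cipher $c_{i,0}$, while keeping the correct $t_j$ and honest $x_{i,j},r_{i,j}$. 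The cancellation in Equation \eqref{newverificationparameters} is exponent-independent: the leftover random-oracle factor is $\left(\prod_{j=0}^\lambda H(t_j)^{\beta_j}\right)^{s_i^\prime}=1^{s_i^\prime}=1$ \emph{deterministically}, not ``a weighted product of independent uniform $\mathcal{QR}_{p^2}$-values'', so your $O(1/q)$ bound on accidental cancellation does not apply. Then $A_i=\sum_j\beta_j x_{i,j}$ and $B_i=\sum_j\beta_j r_{i,j}$ come out to their honest values, the commitments are honest so there is no shift $\delta$ for your fallback test to catch, all three of your checks pass, and your $\mathcal{D}_i$ outputs $1$ on a deviating input --- this is among the simplest deviations Property~$4$ is supposed to detect. (The analogous issue arises for a consistent substitution of $\tilde{s}_i$, since $\prod_{j=1}^\lambda H(\tilde{t}_j)^{\beta_j}=1$ has no $j=0$ term to anchor it.)

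The ingredient you assumed away when you took $c_{i,0}$ to be ``part of its transcript'' is the paper's anchoring of $c_{i,0}$ to the \emph{true} key: during the key generation of Section \ref{invkeygensec}, each investor publishes $T_{i,i^\prime}=\mbox{\upshape\sffamily PSAEnc}_{s_{i,i^\prime}}(t_0,0)$ on a black board for every key share, the administrator verifies that $\mbox{\upshape\sffamily PSADec}_{-\sum_{i=1}^n s_{i,i^\prime}}(t_0,T_{1,i^\prime},\ldots,T_{n,i^\prime})=0$, and consequently $c_{i,0}=\prod_{i^\prime=1}^n T_{i,i^\prime}=\mbox{\upshape\sffamily PSAEnc}_{s_i}(t_0,0)$ is publicly fixed and cannot be substituted by the investor. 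With $c_{i,0}$ pinned, a consistent wrong key on $j\geq 1$ leaves the nontrivial factor $H(t_0)^{\beta_0(s_i-s_i^\prime)}$, and the paper then closes the argument uniformly, without your type-by-type case analysis: if $b=1$ in the deviating case, the computational binding of the Pedersen commitment under the dlog assumption forces the investor to have produced a valid opening $(A_i,B_i)$ of $\prod_{j=1}^\lambda com_{i,j}^{*^{\beta_j}}$, which requires evaluating the hidden linear functional $f_{\beta_0,\ldots,\beta_\lambda}$ --- impossible except with negligible probability since the $\beta_j$ are secret and super-polynomially spread. This anchoring is the entire reason the dummy parameter $t_0$, the forced $x_{i,0}=0$ and the extra $\beta_0$ exist in the construction; without it your proof, as written, fails.
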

\begin{proof}
Let $T=\{t_1,\ldots,t_\lambda\}$ be the set of public parameters used for encryption and let $pk$ be the public key for the commitment. For all $i\in[n]$, such that investor $N_i$ is not in the coalition with the system administrator, i.e. $i\in U$, let $sk_i$ be the secret key of investor $N_i$. We construct $\mathcal{D}_i$ as follows. On input \[1^\kappa,sk_0,c_{i,0},c^*_{i,1},com^*_{i,1},\tilde{c}^*_{i,1},\ldots,c^*_{i,\lambda},com^*_{i,\lambda},\tilde{c}^*_{i,\lambda},\] 
it has to decide whether $(c^*_{i,j},com^*_{i,j},\tilde{c}^*_{i,j})=(c_{i,j},com_{i,j},\tilde{c}_{i,j})$ for all $j=1,\ldots,\lambda$ or $(c^*_{i,j},com^*_{i,j},\tilde{c}^*_{i,j})=(c^\prime_{i,j},com^\prime_{i,j},\tilde{c}^\prime_{i,j})$ for all $j=1\ldots,\lambda$, where\linebreak $c_{i,j}\leftarrow\mbox{\upshape\sffamily PSAEnc}_{s_i}(t_j,x_{i,j})$, $com_{i,j}\leftarrow\mbox{\upshape\sffamily Com}_{pk}(x_{i,j}, r_{i,j})$, $\tilde{c}_{i,j}\leftarrow\mbox{\upshape\sffamily PSAEnc}_{\tilde{s}_i}(\tilde{t}_j,r_{i,j})$, $c^\prime_{i,j}\leftarrow \mbox{\upshape\sffamily PSAEnc}_{s^{(j)}_i}(t^\prime_j,x^\prime_{i,j})$, $com^\prime_{i,j}\leftarrow\mbox{\upshape\sffamily Com}_{pk^{(j)}}(x^\prime_{i,j}, r^\prime_{i,j})$, $\tilde{c}^\prime_{i,j}\leftarrow\mbox{\upshape\sffamily PSAEnc}_{\tilde{s}^{(j)}_i}(\tilde{t}^\prime_j,r^\prime_{i,j})$ for all $j=1,\ldots,\lambda$, s.t. $(pk^{(j)},s^{(j)}_i,\tilde{s}^{(j)}_i,t^\prime_j,\tilde{t}^\prime_j)\neq(pk,s_i,\tilde{s}_i,t_j,\tilde{t}_j)$ for at least one $j\in[\lambda]$, i.e. there is a difference in at least one entry for at least one tuple. $\mathcal{D}_i$ uses its input to compute 
\begin{align*} A_i & =\left(\left(c_{i,0}^{\beta_0}\cdot\prod_{j=1}^\lambda c_{i,j}^{*^{\beta_j}}\mbox{ mod } p^2\right)-1\right)/p\mbox{ and }\\
 B_i & =\left(\left(\prod_{j=1}^\lambda \tilde{c}_{i,j}^{*^{\beta_j}}\mbox{ mod } p^2\right)-1\right)/p.
\end{align*}
Then it computes and outputs $b=\mbox{\upshape\sffamily Unv}_{pk}\left(\prod_{j=1}^\lambda com_{i,j}^{*^{\beta_j}},A_i,B_i\right)$.
For the first case, i.e. if $(c^*_{i,j},com^*_{i,j},\tilde{c}^*_{i,j})=(c_{i,j},com_{i,j},\tilde{c}_{i,j})$ for all $j=1,\ldots,\lambda$, we have $b=1$ by construction. We have to show that in the second case, i.e. if $(c^*_{i,j},com^*_{i,j},\tilde{c}^*_{i,j})=(c^\prime_{i,j},com^\prime_{i,j},\tilde{c}^\prime_{i,j})$ for all $j=1\ldots,\lambda$, with overwhelming probability $b=0$. Thus, $\mathcal{D}_i$ will distinguish the cases on its input.\\
As a pre-computation, based on the key generation protocol from Section \ref{invkeygensec}, for all $i^\prime=1,\ldots,n$, investor $N_{i^\prime}$ has published \[T_{i,i^\prime}=\mbox{\upshape\sffamily PSAEnc}_{s_{i,i^\prime}}(t_0,0)\] on a black board for all the key shares $s_{1,i^\prime},\ldots,s_{n,i^\prime}$ received during the key generation and the system administrator has verified that \[\mbox{\upshape\sffamily PSADec}_{-\sum_{i=1}^n s_{i,i^\prime}}(t_0,T_{1,i^\prime},\ldots,T_{n,i^\prime})=0,\]  
i.e. all verifications are true (if not, the system aborts and the process is repeated with all but the excluded cheating investors) as described above. Thus,
\[c_{i,0}=\mbox{\upshape\sffamily PSAEnc}_{s_i}(t_0,0)=\prod_{i^\prime=1}^n T_{i,i^\prime}\] 
is fixed, i.e. $c_{i,0}$ is generated with the correct parameters from the first case. Now consider the second case, where $(c^*_{i,j},com^*_{i,j},\tilde{c}^*_{i,j})=(c^\prime_{i,j},com^\prime_{i,j},\tilde{c}^\prime_{i,j})$ for all $j=1\ldots,\lambda$. Recall that the system administrator's secret values $\beta_0,\ldots,\beta_\lambda$ are not efficiently computable by the investors (who do not collaborate with the system administrator). Assume $b=1$. Then 
\[h_1^{A_i} h_2^{B_i}=\prod_{j=1}^\lambda com_{i,j}^{*^{\beta_j}},\] 
where $(h_1,h_2)$ is the public key for the Pedersen commitment which is used for verification by the system administrator \textit{in both cases}. By the computational binding of the Pedersen commitment, which holds under the dlog assumption, $N_i$ must have known $A_i, B_i$ for computing a valid commitment $\prod_{j=1}^\lambda com_{i,j}^{*^{\beta_j}}$ (note that in the first case, this is not necessary, since all ciphers and commitments were generated with consistent parameters). 
This is not possible, except with negligible probability, since from $N_i$'s point of view, $\beta_0,\ldots,\beta_\lambda$ are unknown.
\hfill$\qed$\end{proof}

\addcontentsline{toc}{chapter}{Literatur}
\bibliography{Literatur}

\appendix
\section{Preservation of Market liquidity}\label{markliqsec}

We show that it is still possible to privately exchange any part of an investment between any pair of investors within Investcoin, i.e. market liquidity is unaffected. Assume that an investment round is over but the returns are not yet executed, i.e. the system administrator already received $c_{i,j},(com_{i,j},\tilde{c}_{i,j}),C_i,D_i$ for all $i=1,\ldots,n$ and $j=1,\ldots,\lambda$, but not $E_i,F_i$. Assume further that for some $i,i^\prime\in\{1,\ldots,n\}$, investors $N_i$ and $N_{i^\prime}$ confidentially agree on a transfer of amount $x_{(i,i^\prime),j}$ (i.e. a part of $N_i$'s investment in project $P_j$) from investor $N_i$ to investor $N_{i^\prime}$. This fact needs to be confirmed by the protocol in order to guarantee the correct returns from project $P_j$ to investors $N_i$ and $N_{i^\prime}$. Therefore the commitments to the invested amounts $x_{i,j}$ and $x_{i^\prime,j}$ respectively need to be updated. For the update, $N_i$ and $N_{i^\prime}$ agree on a value $r_{(i,i^\prime),j}\leftarrow_R [0,m]$ via secure channel. This value should be known only to $N_i$ and $N_{i^\prime}$. Then $N_i$ and $N_{i^\prime}$ respectively compute
\begin{align*} com_{i,j}^\prime & \leftarrow\mbox{\upshape\sffamily Com}_{pk}(x_{(i,i^\prime),j}, r_{(i,i^\prime),j}),\\ com_{i^\prime,j}^\prime & \leftarrow\mbox{\upshape\sffamily Com}_{pk}(x_{(i,i^\prime),j}, r_{(i,i^\prime),j})
\end{align*}
and send their commitments to the system administrator which verifies that $com_{i,j}^\prime=com_{i^\prime,j}^\prime$.
Then the system administrator updates 
\[com_{i,j} \mbox{ by } com_{i,j}\cdot(com_{i,j}^\prime)^{-1},\]
which is possible since the $\mbox{\upshape\sffamily Com}$ algorithm is injective, and 
\[com_{i^\prime,j} \mbox{ by } com_{i^\prime,j}\cdot com_{i^\prime,j}^\prime.\] 
As desired, the updated values commit to $x_{i,j}-x_{(i,i^\prime),j}$ and to $x_{i^\prime,j}+x_{(i,i^\prime),j}$ respectively. Moreover, $N_i$ updates the return values $(E_i, F_i)$ by 
\[(E_i-\alpha_j\cdot x_{(i,i^\prime),j}, F_i-\alpha_j\cdot r_{(i,i^\prime),j})\] and $N_{i^\prime}$ updates $(E_{i^\prime}, F_{i^\prime})$ by
\[(E_{i^\prime}+\alpha_j\cdot x_{(i,i^\prime),j}, F_{i^\prime}+\alpha_j\cdot r_{(i,i^\prime),j}).\] 
The correctness of the update is guaranteed by Property $2$ and the confidentiality of the amount $x_{(i,i^\prime),j}$ (i.e. only $N_i$ and $N_{i^\prime}$ know $x_{(i,i^\prime),j}$) is guaranteed by Property $1$ of Definition \ref{ppisecurity} which are satisfied by Lemma \ref{linkage} and Lemma \ref{h-b-c} respectively.\\
Note that in general, this procedure allows a short sale for $N_i$ when $x_{(i,i^\prime),j}>x_{i,j}$ or for $N_{i^\prime}$ when $x_{(i,i^\prime),j}<0$ and $|x_{(i,i^\prime),j}|>x_{{i^\prime},j}$ (over the integers). If this behaviour is not desired, it may also be necessary to perform a Range test for the updated commitments $com_{i,j}\cdot(com_{i,j}^\prime)^{-1}$ (between the system administrator and $N_i$) and $com_{i^\prime,j}\cdot com_{i^\prime,j}^\prime$ (between the system administrator and $N_{i^\prime}$) to ensure that they still commit to amounts $\geq 0$.

\end{document}